\DeclareFontFamily{U}{mathx}{\hyphenchar\font45}
\DeclareFontShape{U}{mathx}{m}{n}{
      <5> <6> <7> <8> <9> <10>
      <10.95> <12> <14.4> <17.28> <20.74> <24.88>
      mathx10
      }{}
\DeclareSymbolFont{mathx}{U}{mathx}{m}{n}
\DeclareMathSymbol{\bigtimes}{1}{mathx}{"91}
\definecolor{DarkRed}{rgb}{0.5,0.1,0.1}
\definecolor{DarkBlue}{rgb}{0.1,0.1,0.5}
\definecolor{ForestGreen}{rgb}{0.1333,0.5451,0.1333}
\definecolor{Red}{rgb}{0.9,0,0}
\crefname{property}{property}{Property}
\crefname{equation}{eq}{Eq}
\def\BState{\State\hskip-\ALG@thistlm}
\newtheorem{theorem}{Theorem}
\newtheorem{lemma}{Lemma}[section]
\newtheorem{proposition}[lemma]{Proposition}
\newtheorem{corollary}[lemma]{Corollary}
\newtheorem{claim}[lemma]{Claim}
\newtheorem*{claim*}{Claim}
\newtheorem*{proposition*}{Proposition}
\newtheorem*{lemma*}{Lemma}
\newtheorem*{problem*}{Problem}
\crefname{lemma}{Lemma}{Lemmas}
\crefname{claim}{Claim}{Claims}
\newtheorem{mdresult}{Result}
\newenvironment{result}{\begin{mdframed}[backgroundcolor=lightgray!40,topline=false,rightline=
		false,leftline=false,bottomline=false,innertopmargin=2pt]
		\begin{mdresult}}{\end{mdresult}\end{mdframed}}
\newtheorem*{mdresult*}{Main Result}
\newtheorem{definition}[lemma]{Definition}
\theoremstyle{definition}
\newtheorem{mdinvariant}[lemma]{Lemma}
\theoremstyle{definition}
\newtheorem{mdalg}{Algorithm}
\newenvironment{Algorithm}{\begin{tbox}\begin{mdalg}}{\end{mdalg}\end{tbox}}
\renewcommand{\qed}{\nobreak \ifvmode \relax \else
      \ifdim\lastskip<1.5em \hskip-\lastskip
      \hskip1.5em plus0em minus0.5em \fi \nobreak
      \vrule height0.75em width0.5em depth0.25em\fi}
\renewcommand{\leq}{\leqslant}
\renewcommand{\geq}{\geqslant}
\newcommand{\Ot}{\ensuremath{\widetilde{O}}}
\newcommand{\eps}{\ensuremath{\varepsilon}}
\newcommand{\Paren}[1]{\Big(#1\Big)}
\newcommand{\bracket}[1]{\left[#1\right]}
\newcommand{\paren}[1]{\ensuremath{\mleft(#1\mright)}\xspace}
\newcommand{\card}[1]{\left\vert{#1}\right\vert}
\newcommand{\prob}[1]{\Pr\paren{#1}}
\newcommand{\expect}[1]{\Exp\bracket{#1}}
\newcommand{\set}[1]{\ensuremath{\left\{ #1 \right\}}}
\newcommand{\poly}{\mbox{\rm poly}}
\DeclareMathOperator*{\Exp}{\ensuremath{{\mathbb{E}}}}
\DeclareMathOperator*{\Prob}{\ensuremath{\textnormal{Pr}}}
\renewcommand{\Pr}{\Prob}
\newenvironment{tbox}{\begin{tcolorbox}[
		enlarge top by=5pt,
		enlarge bottom by=5pt,
		 breakable,
		 boxsep=0pt,
                  left=4pt,
                  right=4pt,
                  top=10pt,
                  arc=0pt,
                  boxrule=1pt,toprule=1pt,
                  colback=white
                  ]%%
	}
{\end{tcolorbox}}
\newcommand{\II}{\ensuremath{\mathbb{I}}}
\newcommand{\mireal}[1][]{
  \ifx\relax#1\relax%
    \II(\mione \,; \mitwo)%
  \else%
    \II(\mione \,; \mitwo\mid #1)%
  \fi
}
\newcommand{\disj}{\ensuremath{\textnormal{\textsc{DISJ}}_N}\xspace}
\newcommand{\reps}{r}
\title{Tight Bounds for Vertex Connectivity in Dynamic Streams}
\author{}
\author{
	Sepehr Assadi\footnote{(\href{mailto:sepehr.assadi@rutgers.edu}{sepehr@assadi.info)} 
Department of Computer Science, Rutgers University.  Research supported in part by a NSF 
CAREER 
Grant CCF-2047061, a Google Research gift, and a Fulcrum award from Rutgers Research Council.} \and 
Vihan Shah\footnote{(\href{mailto:vihan.shah98@rutgers.edu}{\text{vihan.shah98@rutgers.edu}}) 
Department of Computer Science, Rutgers University.  Research supported in part by a NSF 
CAREER 
Grant CCF-2047061. }  
}
\date{}
\begin{document}
\maketitle

%%\vspace{-0.55cm}
\thispagestyle{empty}
\pagenumbering{roman}

% !TeX root = main.tex 
%!TEX root = main.tex

\begin{abstract}
	We present a streaming algorithm for the vertex connectivity problem in dynamic streams with a 
	(nearly) optimal space bound: for any $n$-vertex graph $G$ and any integer $k \geq 1$, 
	our algorithm  with high probability outputs whether or not $G$ is $k$-vertex-connected in a single 
	pass using $\Ot(k n)$ space\footnote{Throughout the paper, we use $\Ot(f) := O(f \cdot 
	\poly\log{f})$ to hide poly-logarithmic factors.}.

	\smallskip
	
	Our upper bound matches the known $\Omega(k  n)$ lower bound for this problem even in insertion-only streams---which we extend to multi-pass algorithms in this paper---and closes one of the last remaining gaps  
	in our understanding of dynamic versus insertion-only streams. Our result is obtained via a novel analysis of the previous best dynamic streaming algorithm of Guha, McGregor, and Tench [PODS 2015] who obtained an $\Ot(k^2 n)$ 
	space algorithm for this problem. This also gives a model-independent algorithm for computing a ``certificate'' of $k$-vertex-connectivity as a union of $O(k^2\log{n})$ spanning forests, each on a random subset of $O(n/k)$ vertices, which
	may be of independent interest.
	
\end{abstract}

%
%
%\clearpage
%
\bigskip
\setcounter{tocdepth}{3}
\tableofcontents

\clearpage

\pagenumbering{arabic}
\setcounter{page}{1}

% !TeX root = main.tex 
%!TEX root = main.tex

\section{Introduction}\label{sec:intro} 

The \emph{vertex connectivity} of an undirected graph $G=(V,E)$, with $n$ vertices and $m$ edges, is the 
size of the smallest vertex cut in $G$, defined as the minimum number of vertices whose removal 
disconnects the graph (or turns it into a singleton vertex).
Finding the vertex connectivity of a graph is a fundamental problem in combinatorial optimization 
and has been extensively studied in the literature; see, e.g.,\cite{kleitman1969methods,podderyugin1973algorithm,even1975network,
	becker1982probabilistic,linial1988rubber,henzinger2000computing,li2021vertex}.
This problem can be solved in ``polylogarithmic max-flow time'' via a result of \cite{li2021vertex}, which combined 
with the recent breakthrough improvement for max-flow computation in \cite{chen2022maximum}, leads to 
an $m^{1+o(1)}$ time algorithm for vertex connectivity.

We study the vertex connectivity problem in \emph{dynamic streams}. 
In this model, the edges of the input graph $G$ are presented to the algorithm as a sequence of 
both edge insertions and deletions. The goal is to, given an integer $k$ at the start of the stream, process the stream with limited space and at the end output whether or not the graph is \emph{$k$-vertex-connected}, namely, its vertex connectivity is at least $k$. 
In fact, in our algorithm, we focus on not only deciding if the graph is $k$-vertex connected or not, but also outputting a \emph{certificate} of $k$-vertex-connectivity defined as follows: 

\begin{definition}\label{def:certificate}
	For any graph $G=(V,E)$, a \textnormal{\textbf{certificate}} of $k$-vertex-connectivity for $G$ is a subgraph on the 
	same vertex set $H=(V,E_H)$ such that $G$ is $k$-vertex-connected if and only if $H$ is 
	$k$-vertex-connected.
\end{definition}

A certificate of $k$-vertex-connectivity needs $\Omega(kn)$ edges since any 
$k$-vertex-connected graph has at least $kn/2$ edges, as it needs to have a minimum degree of 
at least $k$. Mader's theorem (see \Cref{prop:mader-thm}) implies that
there is also always a certificate with $O(kn)$ edges although we will not use this result directly in our dynamic streaming algorithm.

In insertion-only streams with no edge deletions, it has been known since the introduction of the model in~\cite{FeigenbaumKMSZ05}, that one can find a certificate of $k$-vertex connectivity in $\Ot(kn)$ space 
using the sparsification techniques of~\cite{cheriyan1993scan} 
or~\cite{eppstein1997sparsification}\footnote{While this connection has 
been observed in several places, e.g., in~\cite{sun2015tight,guha2015vertex}, we are not aware of a 
concrete reference that includes this proof in the streaming model and thus provide a  
self-contained proof in~\Cref{app:insertion-only} for completeness.}, which is nearly optimal. 
Moreover,~\cite{sun2015tight} proved that $\Omega(kn)$ space is needed even for the original (decision) problem, thus settling the space complexity of the problem in insertion-only streams, up to logarithmic factors. 
But when it comes to dynamic streams, the best upper bound achieves $\Ot(k^2 n)$ space~\cite{guha2015vertex} with no better known lower bounds. We close this gap in this paper. 

Our main result is a general (model-independent) approach for computing a vertex connectivity certificate. 

\begin{result}[Formalized in~\Cref{thm:k-con-cert}] \label{res:main}
	For any graph $G$ and integer $k \geq 1$, let $H$ be a subgraph of $G$ with $O(kn\log{n})$ edges obtained as a union of $O(k^2\log{n})$ spanning forests, each on a random subset of $O(n/k)$ vertices chosen independently of the others. 
	Then, with high probability, $H$ is a certificate of $k$-vertex-connectivity for $G$. This certificate also preserves \underline{all} vertex cuts of size up to $k$ and can determine if any two 
	given vertices are $k$-vertex-connected or not.
\end{result}

\Cref{res:main} reduces the problem of finding a certificate of $k$-vertex-connectivity to computing spanning forests on random subsets of vertices. 
This gives a general approach for solving $k$-vertex-connectivity across different models that can be of its own independent interest. In particular,~\Cref{res:main} immediately leads to a dynamic streaming algorithms for $k$-vertex-connectivity when combined with the dynamic streaming algorithms of~\cite{AhnGM12} for computing spanning forests. 

\begin{mdresult}[Formalized in~\Cref{thm:one-pass-con-alg}]\label{res:alg}
	There is a randomized single-pass $\Ot(kn)$ space algorithm that solves $k$-vertex-connectivity with high probability in dynamic streams.
\end{mdresult}

\Cref{res:alg} also closes one of the last remaining gaps in  understanding of dynamic versus 
insertion-only streams. Starting with the breakthrough of~\cite{AhnGM12} that initiated the study of 
dynamic graph streams, various 
graph problems such as cut sparsifiers \cite{AhnGM12b}, spectral sparsifiers 
\cite{kapralov2017single}, densest subgraph \cite{McGregorTVV15}, 
subgraph counting \cite{AhnGM12b}, 
and $(\Delta+1)$-vertex coloring \cite{AssadiCK19a} were shown to admit algorithms in dynamic streams with similar guarantees 
as those of insertion-only streams. In particular, for the closely related problem of $k$-edge-connectivity, it was shown already 
by~\cite{AhnGM12} how to obtain similar bounds in dynamic streams as insertion-only streams. 
For a few other problems such as maximum matchings and minimum vertex cover, strong separations 
between the two models were proven in~\cite{Konrad15,AssadiKLY16} (see 
also~\cite{DarkK20,AssadiS22, DBLP:conf/approx/NaiduS22}); a {conjectured} separation for the 
shortest path problem also appears 
in~\cite{FiltserKN21}. Before our~\Cref{res:alg} however, it was not clear vertex connectivity belongs to which family of these problems. 

As a secondary result, we also extend the previous $\Omega(kn)$ lower bound of~\cite{sun2015tight} to multiple passes.

\begin{mdresult}[Formalized in~\Cref{thm:LB-multipass}]\label{res:lower}
	Any randomized $p$-pass streaming algorithm that solves $k$-vertex-connectivity 
	with constant probability even in insertion-only streams, needs $\Omega(kn/p)$ space.
\end{mdresult}
%\vihan{Why is the environment different than result 1}

\Cref{res:lower} is proven by lower bounding the communication complexity of $k$-vertex-connectivity problem with $\Omega(kn)$ bits. This also answers an open problem of 
\cite{blikstad2022nearly} in negative that asks whether recent equivalences between vertex connectivity and vertex-capacitated max-flow in~\cite{li2021vertex} in classical setting also extends to communication complexity model 
(a recent result of~\cite{blikstad2022nearly} shows that vertex-capacitated max-flow (unit capacity) can be solved with $\Ot(n)$ communication, while our communication lower bound in general implies an $\Omega(n^2)$ lower 
bound for determining the vertex connectivity of a graph, thus ruling out such an equivalence).  We remark that our~\Cref{res:lower}, similar to the previous lower bound for streaming vertex connectivity in~\cite{sun2015tight}, holds on multi-graphs (with at most two parallel edges per pairs of vertices); our algorithm in~\Cref{res:alg}
also works for multi-graphs. It remains an interesting open question to prove any streaming lower bound for vertex connectivity on simple graphs as well (even for single-pass algorithms). 

\paragraph{Our techniques.} \Cref{res:main} (and by extension~\Cref{res:alg}) is obtained via a novel and improved \emph{analysis} of the dynamic streaming algorithm of~\cite{guha2015vertex} (with minor modifications), that leads to an improved 
bound on the size of the certificate\footnote{The algorithm in \cite{guha2015vertex} is presented as a streaming 
algorithm but it implicitly gives a certificate with $O(k^2n\log{n})$ edges.}.~\cite{guha2015vertex} presented algorithms for two \emph{relaxations} of the vertex connectivity problem in dynamic streams:  
\begin{itemize}
	\item $k$-vertex-\textbf{query}-connectivity problem: the algorithm is additionally given a set $X$ of size at most $k$ \emph{after} the stream and the goal is to determine 
	whether removing $X$ from $G$ disconnects the graph or not (this is similar to the vertex-failure connectivity oracle problem studied in\cite{long2022near}). \cite{guha2015vertex} 
	showed that this problem can be solved in $\Ot(kn)$ space (and that this is also nearly optimal for this problem). 
	
	\item \textbf{promised-gap} $k$-vertex-connectivity problem: the algorithm is given a parameter $\eps \in (0,1)$ at the \emph{start} of the stream and the goal is to determine 
	whether the vertex connectivity of the input graph $G$ is at least $k$ or at most $(1-\eps) \cdot 
	k$. \cite{guha2015vertex} showed that this problem could be solved in $\Ot(kn/\eps)$ space 
	(using an algorithm quite 
	similar to the one for the previous case). 
\end{itemize}
Both these algorithms work roughly as follows (think of $\eps = \Theta(1)$ for the second one in this context): for $\Ot(k^2)$ times in parallel, sample $\Ot(n/k)$ vertices from the input graph uniformly at random and maintain a spanning forest on these vertices using the dynamic streaming algorithm of~\cite{AhnGM12} (this is basically the same approach taken in our~\Cref{res:main}). 
Then, solve the problem on these stored set of edges at the end of the stream. This approach can be 
implemented in $\Ot(k^2) \cdot \Ot(n/k) = \Ot(kn)$ space and~\cite{guha2015vertex} proves, using a 
somewhat different analysis, that this solves the problem in each case with high probability. 

One can solve the original $k$-vertex-connectivity problem using this approach as follows. For the query problem, boost the probability of success of the algorithm to $1-n^{-k}$ by running the algorithm in parallel $\Theta(k\log{n})$ 
times; this allows for taking a union bound over at most $\binom{n}{k}$ possible choices for the 
query set $X$ and testing whether removal of \emph{any} of them can disconnect the graph. For the 
promised-gap problem, we can  
set $\eps = 1/k$ which allows us to distinguish between graphs with vertex connectivity $k$ versus $k-1$ and thus solve the $k$-vertex-connectivity problem. Nevertheless, as is apparent, either of these solutions leads to an algorithm with  $\Ot(k^2n)$ space
which is sub-optimal. 

The key novelty in our work is another analysis of essentially the same vertex-sampling plus spanning forest computation approach of~\cite{guha2015vertex}. This analysis allows us to ``beat the union bound'' over all possible $k$-subsets of vertices
as candidate choices for the vertex cut, that was the source of the additional factor $k$ in space in the algorithm of~\cite{guha2015vertex}. In particular, our analysis consists of two parts. We first show that all pairs of vertices 
with sufficiently ``high'' vertex connectivity, say, at least $2k$, remain at least 
$k$-vertex-connected even over the stored edges of the sampling approach (this part is quite 
similar to the guarantee of promised-gap algorithm of~\cite{guha2015vertex}). 
We then prove that \emph{all} edges in the input graph with ``low'' vertex connectivity between their endpoints, say, less than $2k$, are recovered by this sampling approach entirely\footnote{The fact that the number of these edges itself is sufficiently small is a direct corollary of Mader's theorem (\Cref{prop:mader-thm}) that 
states that every graph with $O(kn)$ edges has a $(2k)$-vertex-connected subgraph. This theorem is at the heart of existing (near) optimal algorithms for vertex connectivity in insertion-only streams (see~\Cref{app:insertion-only}).
Nevertheless, since our proof requires additionally recovering these edges via a particular sampling method, it does \emph{not} rely on Mader's theorem, and instead, as a corollary, implies a weaker variant of Mader's theorem (with $O(kn\log{n})$ edges instead) via a probabilistic argument quite different from the standard proofs of this theorem (\Cref{app:Mader}).}. Finally, we combine these two parts to argue 
that the sampled set of edges is a certificate for $k$-vertex-connectivity of the input graph and conclude the proof. This proof more generally shows that \emph{all} minimum (global) vertex cuts as well as all $s$-$t$ vertex cuts of size up to $k$ are 
preserved in this sampling process.

% !TeX root = main.tex 
%!TEX root = main.tex

\section{Preliminaries}\label{sec:prelim}

 \paragraph{Notation.}
For a graph $G=(V,E)$, we use $\deg(v)$ and $N(v)$ for each vertex $v \in V$ to denote the degree 
and neighborhood of $v$, respectively. For a subset $F$ of edges in $E$, we use $V(F)$ to denote 
the vertices incident on $F$; similarly, for a set $U$ of vertices, $E(U)$ denotes the edges 
incident on $U$. We further use $G[U]$ for any set $U$ of vertices to denote the induced subgraph of $G$ on $U$. 
For any two vertices $s,t \in V$, we say that a collection of $s$-$t$ paths are vertex-disjoint if they do not share any vertices other than $s$ and $t$.

We use the following standard forms of Chernoff bounds.

\begin{proposition}[Chernoff bound; c.f.~\cite{dubhashi2009concentration}]\label{prop:chernoff}
	Suppose $X_1,\ldots,X_m$ are $m$ independent random variables with range $[0,b]$ each for some $b \geq 1$. Let 
	$X := \sum_{i=1}^m X_i$ and $\mu_L \leq \expect{X} \leq \mu_H$. Then, for any $\eps > 0$, 
	\[
	\Pr\paren{X >  (1+\eps) \cdot \mu_H} \leq \exp\paren{-\frac{\eps^2 \cdot \mu_H}{(3+\eps) \cdot b}} \quad 
	\textnormal{and} \quad \Pr\paren{X <  (1-\eps) \cdot \mu_L} \leq \exp\paren{-\frac{\eps^2 \cdot 
	\mu_L}{(2+\eps) \cdot b}}.
	\]
\end{proposition}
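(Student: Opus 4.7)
The plan is to prove both tails via the standard exponential moment (Chernoff–Hoeffding) method, with extra care taken to track the range parameter $b$ so that it appears correctly in the exponents.

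First I would rescale. Setting $Y_i := X_i / b$, each $Y_i$ lies in $[0,1]$, is independent, and $Y := \sum_i Y_i$ satisfies $\expect{Y} \in [\mu_L/b, \mu_H/b]$. The tail bounds on $X$ translate directly into tail bounds on $Y$ with thresholds scaled by $1/b$, so it suffices to establish the standard Chernoff inequalities for sums of independent $[0,1]$-valued variables and then substitute back. This reduction is where the factor $b$ in the denominator of the exponent will come from.

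Next, for the upper tail I would apply Markov's inequality to the random variable $e^{tY}$ for $t>0$:
\[
\Pr(Y > (1+\eps)\mu_H/b) \leq e^{-t(1+\eps)\mu_H/b}\cdot \expect{e^{tY}} = e^{-t(1+\eps)\mu_H/b}\prod_i \expect{e^{tY_i}},
\]
using independence. For each factor, convexity of $y \mapsto e^{ty}$ on $[0,1]$ gives $e^{tY_i} \leq 1 + (e^t-1)Y_i$, and taking expectations plus $1+x \leq e^x$ yields $\expect{e^{tY_i}} \leq \exp((e^t-1)\expect{Y_i})$. Multiplying these and using $\sum_i \expect{Y_i} \leq \mu_H/b$ gives the standard bound $\exp(((e^t-1) - t(1+\eps))\mu_H/b)$. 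Choosing $t = \ln(1+\eps)$ and invoking the elementary inequality $(1+\eps)\ln(1+\eps) - \eps \geq \eps^2/(2+\eps) \geq \eps^2/(3+\eps)$ (provable by comparing derivatives at $\eps = 0$) yields the claimed upper-tail bound after rescaling by $b$.

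For the lower tail the approach is symmetric: apply Markov to $e^{-tY}$ with $t>0$, use $\expect{e^{-tY_i}} \leq \exp((e^{-t}-1)\expect{Y_i})$, and exploit $\sum_i \expect{Y_i} \geq \mu_L/b$ (noting that $e^{-t}-1 \leq 0$, so lowering the sum weakens the bound in the right direction). Optimizing at $t = -\ln(1-\eps)$ and applying the analogous inequality $(1-\eps)\ln(1-\eps) + \eps \geq \eps^2/(2+\eps)$ delivers the lower-tail bound. The only real subtlety — and the step I expect to require the most care — is verifying the two elementary one-variable inequalities that convert the optimized exponents into the clean $\eps^2/(3+\eps)$ and $\eps^2/(2+\eps)$ forms; these follow from Taylor expansion together with a monotonicity check but must be stated correctly to match the constants in the proposition.
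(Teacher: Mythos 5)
The paper states this Chernoff bound as a known result, cited from Dubhashi and Panconesi, and provides no proof of its own; your argument correctly reconstructs the standard exponential-moment (Chernoff--Hoeffding) derivation used in that reference, including the rescaling to $[0,1]$-valued variables, the convexity bound $\Exp[e^{tY_i}] \leq \exp((e^t-1)\Exp[Y_i])$, and the choice $t=\ln(1+\eps)$ (resp.\ $t=-\ln(1-\eps)$). The one-variable inequalities $(1+\eps)\ln(1+\eps)-\eps \geq \eps^2/(2+\eps) \geq \eps^2/(3+\eps)$ and $(1-\eps)\ln(1-\eps)+\eps \geq \eps^2/(2+\eps)$ do hold (the first two derivatives vanish at $\eps=0$ and the third is positive, so a Taylor/monotonicity check as you indicate suffices), so the proof is complete and matches the standard approach.
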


We use the term ``with high probability" to mean with probability at least 
$1-1/n^c$ for some large constant $c > 0$, which can be made arbitrarily large by  
increasing the space of our algorithms with a constant factor.

 \paragraph{Dynamic graph streams.}
The dynamic graph streaming model is defined formally as follows. 
 
\begin{definition}\label{def:dynamic-stream}
	A dynamic stream $\sigma = (\sigma_1,\ldots,\sigma_N)$ defines a multi-graph $G=(V, E)$ on $n$ 
	vertices.
	Each entry of the stream is a tuple $\sigma_k = (i_k, j_k,\Delta_k)$ for $i_k, j_k \in [n]$ and 
	$\Delta_i \in \set{-1,+1}$. 
	The multiplicity of an edge $(u,v)$ is defined as:
	\[
		A(u,v) = \sum_{\sigma_k: i_k = u \; \wedge \; j_k=v} \Delta_k. 
	\]
	The multiplicity of every edge is required to be always non-negative.
\end{definition}

The goal in this model is to design algorithms that can process a dynamic stream using limited space 
and at the end of the stream, output a solution to the underlying problem for the (multi-)graph 
defined
by the stream. Throughout the paper, we measure the space of the algorithms in \emph{bits}. 

We use the algorithm of~\cite{AhnGM12} that can find a spanning forest of a graph in a dynamic 
stream. 
\begin{proposition}[\!\!\cite{AhnGM12}] \label{prop:spanning-forest}
	There is an algorithm that given any $N$-vertex graph $G$ in a dynamic stream and  
	$\delta \in (0,1)$, computes a spanning forest $T$ of $G$ with probability at least $1-\delta$ in 
	$O(N \log^3 (N/\delta))$ space.
\end{proposition}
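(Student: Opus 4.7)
The plan is to implement the $\ell_0$-sampling spanning-forest sketch of Ahn--Guha--McGregor. First, I associate to each vertex $v \in V$ a signed incidence vector $x_v \in \{-1,0,+1\}^{\binom{V}{2}}$ whose coordinate at position $\{u,w\}$ is $+1$ if the edge $\{u,w\}$ is present and $v=\min(u,w)$, $-1$ if it is present and $v=\max(u,w)$, and $0$ if $v \notin \{u,w\}$ or the edge is absent. The key algebraic identity is that for any vertex subset $S \subseteq V$, $\sum_{v \in S} x_v$ has nonzero support exactly on the cut $E(S, V\setminus S)$, since every edge internal to $S$ appears with a $+1$ and a $-1$ and cancels. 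Each coordinate of $x_v$ takes values in $\{-1,0,+1\}$, so each dynamic stream update $(i_k,j_k,\Delta_k)$ translates into $\pm 1$ updates to the two relevant coordinates of $x_{i_k}$ and $x_{j_k}$.

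Second, for each vertex I maintain $r = \lceil \log_2 N \rceil + 1$ independent \emph{linear} $\ell_0$-sampling sketches of $x_v$, each with failure probability $\delta' := \delta/(2N\log N)$. A standard $\ell_0$-sampler (e.g.\ Jowhari--Saglam--Tardos) for an $M$-dimensional signed vector uses $O(\log^2(M/\delta'))$ bits and returns a uniformly random nonzero coordinate with probability $1-\delta'$. In our setting $M = \binom{N}{2}$ and $\log(1/\delta') = O(\log(N/\delta))$, so each sketch costs $O(\log^2(N/\delta))$ bits, and the total over all $N$ vertices and $r = O(\log N)$ batches is $O(N \log^3(N/\delta))$. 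Linearity of the sketches is the crucial property: in post-processing, summing the $i$-th sketches over any $S \subseteq V$ yields a valid $\ell_0$-sketch of $\sum_{v \in S} x_v$, i.e.\ of the cut vector of $S$.

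Third, the post-processing simulates Bor\r{u}vka's algorithm for $\lceil \log_2 N \rceil$ rounds. Start from the singleton partition; in round $i$, for each current component $C$, sum the $i$-th sketches of the vertices of $C$, query the sampler to obtain one outgoing edge of $C$ (if any exists), add every recovered edge to $T$, and merge the corresponding endpoints using union--find. Each round at least halves the number of components that are strictly smaller than their enclosing connected component, so after $O(\log N)$ rounds the partition coincides with the connected-component partition of $G$ and $T$ is a spanning forest. Correctness follows from a union bound over the at most $N \log N$ sampler invocations, each of which fails with probability at most $\delta'$, for a total failure probability $\leq \delta$.

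The main subtlety, and the reason a single batch of samplers does not suffice, is adaptivity: the component $C$ queried in round $i$ is a function of the outputs of earlier queries and hence of the internal randomness of earlier sketches. Using a \emph{fresh, independent} batch of samplers in each Bor\r{u}vka round circumvents this, because within a single round the summations being queried are determined by randomness independent of the batch being consulted, so the marginal $\ell_0$-sampler guarantee applies. Everything else---book-keeping for union--find, translating each stream update $(i_k,j_k,\Delta_k)$ into the corresponding $\pm 1$ updates of $x_{i_k}$ and $x_{j_k}$, and handling isolated vertices---is routine and fits within the claimed $O(N \log^3(N/\delta))$ space bound.
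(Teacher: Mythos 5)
The paper does not prove this proposition; it is cited verbatim from Ahn--Guha--McGregor~\cite{AhnGM12}, so there is no paper-internal proof to compare against. Your reconstruction is the standard AGM argument and is essentially correct in structure: signed vertex-incidence vectors so that $\sum_{v\in S}x_v$ encodes the cut $E(S,V\setminus S)$; one independent batch of linear $\ell_0$-sampling sketches per Bor\r{u}vka round to defeat adaptivity; and a union bound over the $O(N\log N)$ sampler queries. The one place to be careful is the per-sampler space bound: the usual statement of the Jowhari--Saglam--Tardos sampler for an $M$-dimensional vector with failure probability $\delta'$ is $O(\log^2 M\cdot\log(1/\delta'))$ bits (run $O(\log(1/\delta'))$ copies of a constant-success-probability $O(\log^2 M)$-bit sampler), not $O(\log^2(M/\delta'))$. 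With $M=\binom{N}{2}$ and $\delta'=\delta/(2N\log N)$ this gives $O(\log^2 N\cdot\log(N/\delta))$ bits per sketch and $O(N\log^3 N\cdot\log(N/\delta))$ overall, which matches the stated $O(N\log^3(N/\delta))$ only when $\delta=1/\poly(N)$ --- exactly the regime in which the paper invokes the proposition (it sets $\delta=n^{-4}$), so no harm is done here, but the polylog bookkeeping as written is not quite tight for arbitrary $\delta$. Everything else --- the cancellation identity, the at-least-halving of non-maximal components per round, the independence argument that makes the fresh-batch-per-round trick work, and the treatment of isolated vertices --- is correct.
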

\vspace{5pt}

% !TeX root = main.tex 
%!TEX root = main.tex

\section{A Certificate of Vertex Connectivity}\label{sec:Certificate}

We present a certificate of $k$-vertex-connectivity in this section, formalizing~\Cref{res:main}. 
Our algorithm is virtually identical (up to changing constants and ignoring implementation 
details in dynamic streams) to the algorithm in \cite{guha2015vertex}
for the $k$-vertex-\textbf{query}-connectivity problem mentioned in the introduction. 
However, we provide an improved analysis showing that it also works for the $k$-vertex-connectivity problem.

\begin{Algorithm}\label{alg:cert}
	An algorithm for computing a certificate of $k$-vertex-connectivity.
	
	\medskip
	
	\textbf{Input:} A graph $G = (V,E)$ and an integer $k$.
	
	\medskip
	
	\textbf{Output:} A certificate $H$ for $k$-vertex-connectivity of $G$.

	\smallskip
		
	\begin{enumerate}
		\item For $i = 1, 2, \ldots, \reps := \paren{200 k^2 \ln n}$ do the following:
		\begin{itemize}
			\item Let $V_i$ be a subset of $V$ where each vertex is sampled independently with 
			probability $1/k$.
			\item Let $G_i=G[V_i]$ be the induced subgraph of $G$ on $V_i$.
			\item Compute a spanning forest $T_i$ of $G_i$. 
		\end{itemize}
	\item Output $H:= T_1 \cup T_2 \cup \ldots \cup T_{\reps}$ as a certificate for $k$-vertex-connectivity 
	of $G$.
\end{enumerate}
	
\end{Algorithm}

The following theorem proves the main guarantee of this algorithm. 

\begin{theorem}\label{thm:k-con-cert}
	\Cref{alg:cert}, given any graph $G=(V,E)$ and any integer $k \geq 1$, outputs a 
	certificate $H$ of $k$-vertex-connectivity of $G$ with $O(k n \cdot \log n)$ edges with high 
	probability.
\end{theorem}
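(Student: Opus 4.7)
The plan is to establish two properties of the output $H$: that $|E(H)| = O(kn \log n)$ and that $H$ is a valid $k$-vertex-connectivity certificate for $G$. The size bound is immediate: $|V_i| \sim \mathrm{Bin}(n, 1/k)$, so $|V_i| = O(n/k)$ w.h.p.\ by Chernoff (\Cref{prop:chernoff}); each spanning forest $T_i$ has at most $|V_i|$ edges; summing over $r = 200 k^2 \ln n$ iterations gives $|E(H)| = O(kn \log n)$ w.h.p.

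For the certificate property, since $H \subseteq G$ it suffices to show that if $G$ is $k$-vertex-connected then so is $H$. The plan is to prove two auxiliary claims and then combine them. \emph{Claim A (low-connectivity edges are recovered):} every edge $(u,v) \in E$ with $\lambda_G(u,v) < 2k$ lies in $H$ w.h.p. To prove this, let $S$ be a minimum $u$-$v$ vertex cut in $G - \{(u,v)\}$, so that $|S| \leq 2k - 2$; whenever $u, v \in V_i$ and $V_i \cap S = \emptyset$ (probability $\geq (1/k)^2 (1 - 1/k)^{2k-2} = \Omega(1/k^2)$) the edge becomes a bridge of $G[V_i]$ and is forced into $T_i$, so over $r$ iterations the edge is captured with probability $\geq 1 - n^{-\Omega(1)}$, which absorbs a union bound over the $\leq n^2$ edges. \emph{Claim B (high-connectivity pairs remain $k$-connected):} every pair $(s,t)$ with $\lambda_G(s,t) \geq 2k$ satisfies $\lambda_H(s,t) \geq k$ w.h.p. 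By Menger's theorem this reduces to showing, for each candidate vertex cut $T' \subset V \setminus \{s,t\}$ with $|T'| < k$, that $s, t$ are connected in $H - T'$; the iterations with $V_i \cap T' = \emptyset$ (of which there are $\Omega(k^2 \log n)$ by Chernoff) yield spanning forests of $(G - T')[V_i]$ sitting inside $H - T'$, and $\lambda_{G-T'}(s,t) \geq k + 1$. The plan is to show these good spanning forests collectively connect $s$ to $t$ with failure probability at most $n^{-k - \Omega(1)}$ per $T'$, which survives the union bound over the $\leq n^{k-1}$ choices of $T'$ and $\leq n^2$ pairs---an argument analogous to the promised-gap analysis of \cite{guha2015vertex} at threshold $2k$ with gap $1/2$.

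Given Claims A and B, the combine step is straightforward: if $G$ is $k$-vertex-connected but $H$ admits a cut $T$ with $|T| < k$, then $G - T$ is still connected, so some edge $(u, v) \in E$ with $u, v \in V \setminus T$ crosses a component boundary of $H - T$, forcing $(u,v) \notin H$; Claim A then forces $\lambda_G(u,v) \geq 2k$, and Claim B then forces $\lambda_H(u,v) \geq k$, contradicting that $T$ separates $u$ from $v$ in $H$. The main obstacle will be Claim B. The difficulty is that even with $\lambda_{G-T'}(s,t) \geq k+1$, the per-iteration event ``$s, t \in V_i$ and $s, t$ in the same component of $G[V_i]$'' can have exponentially small probability---if $s, t$ are joined in $G - T'$ only by $k+1$ long thin vertex-disjoint paths of length $\ell$, each survives with probability $(1/k)^{\Omega(\ell)}$. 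So the argument cannot be per-iteration; it must chain the equivalence relations $\sim_i$ on $V_i$ induced by the spanning forests across good iterations to build connectivity between $s$ and $t$, and it must drive the failure probability down to $n^{-k-\Omega(1)}$ to survive the union bound over $T'$.
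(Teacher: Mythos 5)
Your overall decomposition is exactly the paper's: a size bound via Chernoff, a claim that edges with fewer than $2k$ vertex-disjoint paths between their endpoints are recovered verbatim (your Claim~A $=$ the paper's \Cref{clm:edges-in-H}), a claim that pairs with at least $2k$ vertex-disjoint paths stay $k$-connected in $H$ (your Claim~B $=$ the paper's \Cref{clm:connectivity-in-H}), and the same case analysis to combine them. Claim~A is correct as sketched: when $u,v\in V_i$ and $V_i$ avoids a minimum $u$-$v$ separator $S$ in $G-\{(u,v)\}$, the edge is a bridge of $G_i$ and is forced into $T_i$; this happens with probability $\Omega(1/k^2)$ per iteration, which over $r=\Theta(k^2\log n)$ iterations beats a union bound over $n^2$ edges.

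The gap is in Claim~B, which you yourself flag as ``the main obstacle'': you correctly observe that a per-iteration argument (requiring $s$ and $t$ to land in the same component of a single $G_i$) fails, since a long thin path survives in $G_i$ with probability $k^{-\Omega(\ell)}$, and you correctly name the fix---chain connectivity across iterations using the union of the sampled subgraphs---but you do not carry it out, and without it the $n^{-k-\Omega(1)}$ failure bound you need to survive the union bound over the $\binom{n}{k-1}$ choices of $T'$ is not established. The paper's proof makes the chaining precise as follows. Fix $T'$ of size $k-1$ and restrict attention to $I(T')$, the indices $i$ with $V_i\cap T'=\emptyset$ (which is $\Omega(r)$ w.h.p.). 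Pick $k$ vertex-disjoint $s$-$t$ paths $P_1,\ldots,P_k$ in $G-T'$ and write each as (first edge $a_j$)$\,\cup\,$(middle segment $B_j$)$\,\cup\,$(last edge $c_j$). The key point is that it suffices for each \emph{edge} of some $P_j$ to appear in \emph{some} $G_i$ with $i\in I(T')$---not the same $i$ for all edges---because then, for every edge $(u,v)\in P_j$, the forest $T_i$ contains a $u$-$v$ path avoiding $T'$, and concatenating these forest paths over the edges of $P_j$ yields an $s$-$t$ walk in $H-T'$ (the paper's \Cref{clm:st-path}). One then shows, using independence across the disjoint middle segments $B_j$ and a separate argument for the shared endpoints $s,t$ feeding into $a_j,c_j$, that at least $2k/3$ of the $j$'s have each of $a_j$, $B_j$, $c_j$ contained in the union, each with failure probability $n^{-\Omega(k)}$, so some full path survives. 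This is the step your proposal is missing; everything else matches the paper.
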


The analysis in the proof of~\Cref{thm:k-con-cert} is twofold.
We first show that pairs of vertices that are at least $2k$-vertex-connected in $G$ stay $k$-vertex-connected in 
$H$. 
Secondly, we show that edges whose endpoints are not $2k$-vertex-connected in $G$ will be 
preserved in $H$.
Putting these together, we then show that $H$ is a certificate for $k$-vertex-connectivity of $G$ 
and 
has at most $\Ot(kn)$ edges.

We start by bounding the number of edges of the certificate $H$.
We first show that the sum of sizes of $V_i$ is $O(kn \log n)$ with high probability.
\begin{claim}\label{clm:cert-space-help}
	$\sum_{i=1}^{r} \card{V_i} = O(kn \cdot \log n)$ with high probability.
\end{claim}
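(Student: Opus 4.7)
The plan is to recognize $\sum_{i=1}^{\reps} \card{V_i}$ as a sum of $\reps \cdot n$ mutually independent Bernoulli trials and then invoke the Chernoff bound of \Cref{prop:chernoff}. Specifically, for each iteration $i \in [\reps]$ and vertex $v \in V$, let $X_{i,v}$ denote the indicator of the event that $v \in V_i$. By the construction of \Cref{alg:cert}, the sets $V_1, \ldots, V_{\reps}$ are drawn independently, and within each $V_i$ every vertex is included independently with probability $1/k$; hence the collection $\{X_{i,v}\}_{i,v}$ is fully independent, each with mean $1/k$.

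Writing $X := \sum_{i=1}^{\reps}\sum_{v \in V} X_{i,v} = \sum_{i=1}^{\reps}\card{V_i}$, linearity of expectation gives
\[
\Exp[X] \;=\; \reps \cdot \frac{n}{k} \;=\; 200\, kn \ln n.
\]
Then I would apply the upper-tail form of \Cref{prop:chernoff} with $b = 1$, $\mu_H = 200\, k n \ln n$, and, say, $\eps = 1$, yielding
\[
\Pr\paren{X > 400\, k n \ln n} \;\leq\; \exp\paren{-\tfrac{1}{4} \cdot 200\, k n \ln n} \;=\; n^{-50\, kn},
\]
which is much stronger than the ``high probability'' threshold required by the paper. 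This immediately gives $\sum_{i=1}^{\reps}\card{V_i} = O(kn \log n)$ with high probability.

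The main thing to verify—rather than an obstacle—is that the indicators $X_{i,v}$ really are independent across iterations (as well as within each iteration), which is explicit in the algorithm's description. No additional conditioning is needed at this stage, since the bound only concerns the sampled vertex sets and not the downstream spanning forest subroutine.
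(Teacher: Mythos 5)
Your proof is correct and takes essentially the same approach as the paper: apply the Chernoff bound of \Cref{prop:chernoff} to $X = \sum_{i=1}^r \card{V_i}$ with $\eps = 1$ to conclude concentration around $\Exp[X] = r \cdot n/k = 200\,kn\ln n$. The one refinement is that you decompose $X$ into $rn$ individual Bernoulli indicators $X_{i,v}$ (so $b=1$), whereas the paper treats each $\card{V_i}$ as a single random variable taking values in $[0,n]$ (so $b=n$); your finer decomposition yields the stronger tail bound $n^{-50kn}$ rather than the paper's $n^{-50}$, though both amply suffice for the claim.
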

\begin{proof}
For any iteration $i \in [r]$, the graph $G_i$ has $n/k$ vertices in expectation.
We have $\reps=O(k^2 \ln n)$ iterations so $\sum_{i=1}^r \card{V_i}= O(k n 
\cdot \log n)$ in expectation. We  prove this is the case with high probability as well. 
	
For $i \in [\reps]$, let $X_i$ be the random variable denoting the number of vertices in $V_i$.
We know that $0 \leq X_i \leq n$ (the inequalities are tight when $V_i=\emptyset$ 
and $V_i=V$).
Let $X=\sum_i X_i$ be the random variable governing the sum of sizes of $V_i$'s.
We have $\expect{X_i} = n/k$ implying $\expect{X}= \mu=r \cdot (n/k)$.
Using a Chernoff bound (\Cref{prop:chernoff}) with parameters $b = n$ and $\eps = 1$ we get:
\begin{align*}
	\prob{X > 2 \mu} &\leq \exp\paren{\frac{ -\mu}{4 b}} 
	=\exp\paren{\frac{ -\reps \cdot (n/k) }{4 n}} 
	= \exp\paren{-50k\ln{n}} \leq n^{-50}.
\end{align*}	
Thus, we get that the $\sum_{i=1}^r \card{V_i} \leq 2\mu_H = O(kn \cdot \log n)$ with high 
probability as well. 
\end{proof}

\begin{lemma}\label{lem:cert-space}
	The certificate $H$ in \Cref{alg:cert} has $O(k n \cdot \log n)$ edges with high probability.
\end{lemma}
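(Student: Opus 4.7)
The plan is to reduce the edge count of $H$ directly to the vertex count bound already established in \Cref{clm:cert-space-help}. Since $H$ is defined as the union $T_1 \cup \cdots \cup T_{\reps}$, and each $T_i$ is a spanning forest of the induced subgraph $G_i = G[V_i]$, the first observation is simply that any spanning forest on a vertex set of size $|V_i|$ contains at most $|V_i| - 1 \leq |V_i|$ edges. Thus
\[
    |E(H)| \;\leq\; \sum_{i=1}^{\reps} |E(T_i)| \;\leq\; \sum_{i=1}^{\reps} |V_i|.
\]

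Next I would invoke \Cref{clm:cert-space-help}, which already shows that $\sum_{i=1}^{\reps} |V_i| = O(kn \log n)$ with high probability. Combining the two displays immediately gives $|E(H)| = O(kn \log n)$ with high probability, which is the desired conclusion.

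There is essentially no obstacle here: the lemma is a one-line corollary of \Cref{clm:cert-space-help} once we note the elementary fact $|E(T_i)| \leq |V_i|$ for a forest on $V_i$. The only subtlety worth mentioning is that this bound holds deterministically conditioned on the event from \Cref{clm:cert-space-help}, so the ``high probability'' in the statement inherits exactly from the high-probability event in the claim, and no further probabilistic argument or union bound is required.
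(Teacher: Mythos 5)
Your proof is correct and takes essentially the same approach as the paper: bound $|E(T_i)| \leq |V_i|$ for each spanning forest and then invoke \Cref{clm:cert-space-help} to bound $\sum_i |V_i|$.
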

\begin{proof}
	Each spanning forest $T_i$ has at most $\card{V_i}$ edges.
	Thus, the total number of edges in $H$ can be bounded by $\sum_{i=1}^r \card{V_i} = O(kn \cdot \log n)$ 
	with high probability (by \Cref{clm:cert-space-help}).
\end{proof}

We now prove the correctness of this algorithm in the following lemma. 
\begin{lemma}\label{lem:cert-corr}
	Subgraph $H$ of~\Cref{alg:cert} is a certificate of $k$-vertex-connectivity for $G$ with high 
	probability.
\end{lemma}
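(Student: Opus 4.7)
The plan is to prove the two intermediate claims highlighted in the text preceding the lemma. Call an unordered pair $\{u,v\} \subseteq V$ \emph{high-connectivity} if $u$ and $v$ admit at least $2k$ internally vertex-disjoint paths in $G$, and \emph{low-connectivity} otherwise. I aim to establish (i) every edge $(u,v) \in E(G)$ whose endpoints form a low-connectivity pair belongs to $H$ with high probability, and (ii) every high-connectivity pair remains $k$-vertex-connected in $H$ with high probability. Combining (i) and (ii) via a short path argument then yields the certificate property.

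For (i), fix a low-connectivity edge $(u,v)$. Applying Menger to $G \setminus \{(u,v)\}$ yields a vertex separator $S \subseteq V \setminus \{u,v\}$ of size at most $2k-1$ whose removal disconnects $u$ from $v$ in $G \setminus \{(u,v)\}$. In any iteration $i$ with $u,v \in V_i$ and $V_i \cap S = \emptyset$, every alternative $u$-$v$ path in $G$ uses a vertex of $S$, none of which is in $V_i$; hence $(u,v)$ is the only $u$-$v$ path in $G[V_i]$, is a bridge, and is forced into the spanning forest $T_i$. This good event has per-iteration probability $(1/k)^{2}(1-1/k)^{|S|} = \Omega(1/k^{2})$, so over $r = 200 k^{2} \ln n$ iterations it fails with probability $n^{-\Omega(1)}$; a union bound over the at most $n^{2}$ candidate low-connectivity edges completes (i).

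For (ii), fix a high-connectivity pair $\{u,v\}$. By Menger it suffices to show that for every $X \subseteq V \setminus \{u,v\}$ with $|X| \leq k-1$, the vertices $u$ and $v$ lie in the same component of $H \setminus X$. The pivotal observation is that if some iteration $i$ satisfies $u,v \in V_i$, $V_i \cap X = \emptyset$, and $u,v$ lie in the same component of $G[V_i]$, then the $u$-$v$ tree-path in $T_i$ uses only vertices of $V_i$, thus avoids $X$, and therefore lies in $H \setminus X$. Because $\{u,v\}$ is high-connectivity and $|X| \leq k-1$, at most $k-1$ of the $2k$ internally vertex-disjoint $u$-$v$ paths can be broken by $X$, so $u$ and $v$ remain at least $(k+1)$-vertex-connected in $G \setminus X$. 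Following the promised-gap style analysis of \cite{guha2015vertex} with its built-in constant gap factor of $2$, this excess connectivity suffices to make the good event occur with probability $\Omega(1/k^{2})$ per iteration; so with high probability per triple $(u,v,X)$ the good event occurs in at least one of the $r = 200 k^{2} \ln n$ iterations, and a union bound over the relevant triples completes (ii).

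Finally, I combine (i) and (ii) to prove the certificate property. The reverse direction, $H$ is $k$-vertex-connected $\Rightarrow$ $G$ is $k$-vertex-connected, is immediate since $H \subseteq G$. For the forward direction, assume $G$ is $k$-vertex-connected and fix any $X \subseteq V$ with $|X| \leq k-1$ and any $u,v \in V \setminus X$; I show that $u$ and $v$ are connected in $H \setminus X$. Since $G$ is $k$-vertex-connected, there is a $u$-$v$ path $u = w_{0}, w_{1}, \ldots, w_{\ell} = v$ in $G \setminus X$. For each consecutive edge $(w_{j}, w_{j+1})$: if $\{w_{j}, w_{j+1}\}$ is low-connectivity, then by (i) this edge lies in $H$ and, since neither endpoint is in $X$, the two endpoints are adjacent in $H \setminus X$; otherwise the pair is high-connectivity, so by (ii) the endpoints are $k$-vertex-connected in $H$ and the set $X$ of size at most $k-1$ cannot separate them in $H$. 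Transitivity gives that $u$ and $v$ are connected in $H \setminus X$. The main technical obstacle will be (ii): the per-iteration good event is automatically rare (its probability is at most $1/k^{2}$, just from the requirement $u,v \in V_i$), so pushing the matching lower bound $\Omega(1/k^{2})$ uniformly across all cuts $X$ is delicate and is precisely where the connectivity gap of $2$ between the high-connectivity threshold $2k$ and the target $k$ in $H$ becomes essential.
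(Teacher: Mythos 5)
Your high-level decomposition into (i) low-connectivity edges are kept and (ii) high-connectivity pairs stay $k$-connected, plus the final combination step, matches the paper's structure (its \Cref{clm:edges-in-H}, \Cref{clm:connectivity-in-H}, and the proof of \Cref{lem:cert-corr}). Your proof of (i) is essentially the paper's proof of \Cref{clm:edges-in-H}. Your ``walk along a $u$-$v$ path in $G\setminus X$ and use transitivity'' combination is a valid alternative to the paper's proof by contradiction on a cut $(S,X,T)$; just note you also need to condition, as the paper does, on every vertex of $V$ landing in at least one $V_i$ so that $H$ is genuinely spanning.

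The genuine gap is in (ii), and it is structural rather than a matter of constants. Your ``pivotal observation'' defines the good event for a single iteration $i$ as: $u,v\in V_i$, $V_i\cap X=\emptyset$, \emph{and} $u,v$ lie in the same component of $G[V_i]$. You then assert this has probability $\Omega(1/k^2)$ per iteration. But conditioned on the first two requirements, the probability that $u$ and $v$ are actually connected inside $G[V_i]$ can be far from constant. If every $u$-$v$ path in $G\setminus X$ has $\ell\geq 2$ internal vertices, a given path lands entirely in $V_i$ with probability $(1/k)^{\ell}$, and even with $k+1$ vertex-disjoint paths the chance at least one survives is on the order of $k\cdot k^{-\ell}=k^{1-\ell}$, so the per-iteration good probability is at best $\Theta(1/k^{\ell+1})$. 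Even in the most favorable case --- you only ever apply (ii) to \emph{edges}, so the direct edge $(u,v)$ itself is one of the disjoint paths and the per-iteration good probability really is $\Omega(1/k^2)$ --- you still run into trouble: over $r=200k^2\ln n$ iterations the failure probability for a fixed triple $(u,v,X)$ is only $(1-\Omega(1/k^2))^r = n^{-\Theta(1)}$, and this does not survive the union bound over the $\binom{n}{k-1}$ choices of $X$ once $k$ is larger than a constant. To make that union bound go through you need failure probability $n^{-\Omega(k)}$ per triple, and a single-iteration argument cannot deliver that with $r=O(k^2\log n)$.

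The paper's \Cref{clm:connectivity-in-H} avoids this by never asking a whole path to appear in one $G_i$. It forms $G_X=\bigcup_{i\in I(X)} G_i$ over all iterations avoiding $X$, calls a path \emph{preserved} if each of its edges appears in \emph{some} $G_i$ with $i\in I(X)$ (different edges may use different $i$'s), and then \Cref{clm:st-path} stitches the corresponding tree-paths in the various $T_i$ (all avoiding $X$) into a $u$-$v$ walk in $H-X$. A single edge is missed by all $|I(X)|=\Omega(r)$ iterations with probability $(1-1/k^2)^{\Omega(r)}\leq n^{-\Omega(1)}$, so each of the $k$ vertex-disjoint paths $P_1,\dots,P_k$ is preserved whp; and because the middle segments $B_j$ are vertex-disjoint, their failure events are independent across $j$, so the probability that $\geq k/3$ of them fail is $\binom{k}{k/3}\cdot n^{-\Omega(k/3)}=n^{-\Omega(k)}$ (with a parallel but slightly more delicate calculation for the first and last edges $a_j,c_j$, which all touch $s$ or $t$). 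This $n^{-\Omega(k)}$ decay, obtained by amplification over the $k$ disjoint paths rather than over iterations, is what makes the union bound over $X$ work. Your sketch waves this away with an appeal to the ``promised-gap style analysis'' of~\cite{guha2015vertex}, but that analysis is not a per-iteration probability estimate; the union-of-iterations and preserved-path machinery is the content you would need to reproduce.
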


\Cref{lem:cert-corr} will be proven in two steps. We first show that every pair of vertices that have at least 
$2k$ vertex-disjoint paths 
between them in $G$ have at least $k$ vertex-disjoint paths in $H$ with high 
probability\footnote{By Menger's theorem (\Cref{prop:menger}), this is equivalent to saying any pair 
of vertices that are $(2k)$-vertex-connected in $G$ remain at least $k$-vertex-connected in $H$. 
However, we do not need to explicitly use Menger's theorem in our proofs.}.
\begin{lemma}\label{clm:connectivity-in-H}
	Every pair of vertices $s,t$ in $G$ that have at least $2k$ vertex-disjoint paths between them in $G$
	have at least $k$ vertex-disjoint paths in $H$ with high probability.
\end{lemma}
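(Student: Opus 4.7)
The plan is to use Menger's theorem to reduce the claim to the following: for every candidate cut $X\subseteq V\setminus\{s,t\}$ with $|X|\le k-1$, the vertices $s$ and $t$ remain connected in $H\setminus X$. A union bound over the $\le n^{k-1}$ choices of such $X$ (and eventually over the $O(n^2)$ pairs $(s,t)$) will then lift this to the high-probability claim.

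Fix such an $X$. Since $s,t$ are $2k$-connected in $G$, removing $|X|\le k-1$ vertices destroys at most $k-1$ of the $2k$ vertex-disjoint $s$-$t$ paths (each vertex of $X$ lies on at most one of them), so at least $k+1$ of these paths lie entirely in $G\setminus X$; call them $Q_1,\dots,Q_{k+1}$, with $L_j$ the number of internal vertices of $Q_j$. For each iteration $i$ of \Cref{alg:cert}, define the ``success event''
\[
\mathcal{E}_i(X)\;:=\;\{s,t\in V_i\}\cap\{V_i\cap X=\emptyset\}\cap\{V(Q_j)\subseteq V_i \text{ for some }j\}.
\]
Whenever $\mathcal{E}_i(X)$ occurs, $G_i$ contains the whole path $Q_j$, so $s$ and $t$ lie in the same connected component of $G_i$, and the spanning forest $T_i\subseteq H$ therefore contains an $s$-$t$ path using only vertices of $V_i\subseteq V\setminus X$, i.e., a path in $H\setminus X$.

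Because the internal vertex-sets of the $Q_j$'s are pairwise disjoint, the events $\{V(Q_j)\subseteq V_i\}$ are conditionally independent given $\{s,t\in V_i\}\cap\{V_i\cap X=\emptyset\}$, which yields
\[
\Pr[\mathcal{E}_i(X)] \;\ge\; \tfrac{1}{k^2}\paren{1-\tfrac{1}{k}}^{|X|}\Paren{1-\prod_{j=1}^{k+1}\paren{1-(1/k)^{L_j}}}.
\]
Call this lower bound $p$. By independence of the $r=200k^2\ln n$ iterations, $\Pr[\text{no }\mathcal{E}_i(X)]\le (1-p)^r\le e^{-pr}$, and it suffices to push this below $n^{-(k+c)}$ for a sufficiently large constant $c$ to close all the union bounds.

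The main obstacle is controlling the term $1-\prod_j(1-(1/k)^{L_j})$ from below: in the worst case the $Q_j$'s can be as long as $L_j\sim n/k$ (since $\sum_j L_j\le n-2$), making each individual preservation probability $(1/k)^{L_j}$ negligible; and since $p\le 1/k^2$, the naive budget $pr\le 200\ln n$ does not beat a union bound over $\sim n^{k-1}$ cuts for large $k$. I plan to address this by choosing the $Q_j$'s as a minimum-total-length vertex-disjoint system in $G\setminus X$ and using a convexity/averaging argument on the length sequence $(L_j)_j$ to show $\sum_j(1/k)^{L_j}=\Omega(1)$; failing that, by replacing the per-cut union bound with a more global analysis of the structure of small vertex cuts in $H$ across iterations, in the style of the promised-gap analysis of \cite{guha2015vertex}.
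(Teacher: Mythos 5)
The central idea you are missing is that a path $Q_j$ does \emph{not} have to survive in a single sampled subgraph $G_i$. Your event $\mathcal{E}_i(X)$ demands $V(Q_j)\subseteq V_i$ for some $j$, which forces the whole path into one iteration and gives the hopeless $(1/k)^{L_j}$ term you correctly identify as the obstacle. The paper avoids this entirely: define $I(X):=\{i\in[r]:V_i\cap X=\emptyset\}$ and say a path $P$ in $G-X$ is \emph{preserved} in $G_X:=\bigcup_{i\in I(X)}G_i$ if every \emph{edge} of $P$ lies in $G_i$ for \emph{some} $i\in I(X)$ (different edges may live in different iterations). If $e=(u,v)\in G_i$ for $i\in I(X)$, the spanning forest $T_i$ contains a $u$--$v$ path avoiding $X$, and stitching these sub-paths along $P$ yields an $s$--$t$ walk in $H-X$. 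With $r=\Theta(k^2\log n)$ iterations and each edge appearing in $G_i$ with probability $1/k^2$, a fixed edge lies in $G_X$ except with probability about $n^{-\Theta(1)}$; a union bound over the at most $n$ edges of a path, followed by independence across the $k$ vertex-disjoint middle segments $B_j$ (the first and last edges $a_j,c_j$ share $s,t$ and need a separate conditioning on how many iterations sample $s$, resp.\ $t$), drives the failure probability down to $n^{-\Theta(k)}$, enough to union-bound over $\binom{n}{k-1}\binom{n}{2}$ choices of $(X,s,t)$.

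Your proposed repair does not salvage the approach. Choosing a minimum-total-length disjoint path system cannot keep the $L_j$'s small: if $G-X$ itself is a union of $k$ vertex-disjoint $s$--$t$ paths each of length $\Theta(n/k)$, there is nothing shorter to pick, and $\sum_j (1/k)^{L_j}$ is exponentially small, so no averaging or convexity argument gives $p=\Omega(\operatorname{poly}(1/k))$. The ``whole-path-in-one-iteration'' probability is genuinely too small, and the fix must be structural (allowing stitching across iterations, as above), not an optimization over which paths you pick. Your reduction to connectivity in $H-X$ via Menger and the union bound over $X$ and $(s,t)$ match the paper's outline; the gap is in bounding the per-$(X,s,t)$ failure probability.
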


We then show that every edge whose endpoints have less than $2k$ vertex-disjoint 
paths between them in $G$ will belong to $H$ as well. 

\begin{lemma}\label{clm:edges-in-H}
	Every edge $(s,t) \in G$ that has less than $2k$ vertex-disjoint paths between its endpoints in 
	$G$ belongs to $H$ also with high probability.
\end{lemma}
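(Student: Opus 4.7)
My plan is to show that every low-connectivity edge $(s,t)\in G$ is forced into $H$ because some iteration $i$ of \Cref{alg:cert} samples $V_i$ so as to isolate $(s,t)$ as a bridge of $G_i$, after which it must appear in every spanning forest $T_i$ and hence in $H$. The overall approach is to identify a small vertex cut separating $s$ from $t$ in $G\setminus\{(s,t)\}$ and then argue that a single $V_i$ contains both $s,t$ but avoids that cut with probability $\Omega(1/k^2)$, which over $\Theta(k^2\log n)$ iterations boosts to high probability.

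First I would convert the path hypothesis into a cut: since $(s,t)$ itself is one of the internally vertex-disjoint $s$--$t$ paths, the graph $G':=G\setminus\{(s,t)\}$ admits fewer than $2k-1$ internally vertex-disjoint $s$--$t$ paths, so by Menger's theorem (\Cref{prop:menger}) there exists a set $C\subseteq V\setminus\{s,t\}$ with $|C|\leq 2k-2$ whose removal disconnects $s$ from $t$ in $G'$. For each $i\in[\reps]$, I then define the good event
$$\event_i(s,t) \;:=\; \big\{\, s,t\in V_i \;\text{ and }\; C\cap V_i=\emptyset \,\big\}.$$
Under $\event_i(s,t)$, the subgraph $G_i\setminus\{(s,t)\}$ is contained in $G'\setminus C$, so $s$ and $t$ lie in different components of $G_i\setminus\{(s,t)\}$; thus $(s,t)$ is a bridge of $G_i$ and must belong to every spanning forest $T_i$, and so to $H$. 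Since each vertex enters $V_i$ independently with probability $1/k$ and $|C|\leq 2k-2$, a direct calculation gives
$$\Pr\bracket{\event_i(s,t)} \;\geq\; \frac{1}{k^2}\paren{1-\frac{1}{k}}^{2k-2} \;=\; \Omega\paren{1/k^2}.$$

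Finally, the $\reps=200k^2\ln n$ iterations are independent, so the probability that $\event_i(s,t)$ fails for every $i$ is at most $(1-\Omega(1/k^2))^{\reps}\leq n^{-c}$ for a suitably large constant $c$, and a union bound over the at most $\binom{n}{2}$ candidate edges completes the proof. The only real obstacle I anticipate is the constant bookkeeping: one must verify that $(1-1/k)^{2k-2}$ is bounded below by a universal positive constant (indeed $\geq e^{-2}$ for $k\geq 2$, and trivially $=1$ for $k=1$), and that the absolute constant $200$ in $\reps$ is large enough that the per-edge failure probability comfortably beats the $O(n^2)$ union bound over edges. This is precisely what dictates the $\Theta(k^2\log n)$ scaling in \Cref{alg:cert}.
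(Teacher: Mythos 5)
Your proposal is correct and takes essentially the same approach as the paper: fix a Menger cut $X$ (your $C$) of size less than $2k$ that, together with the edge $(s,t)$, separates $s$ from $t$; call an iteration $i$ good if $V_i$ contains $s$ and $t$ but misses $X$; argue that in a good $G_i$ the edge $(s,t)$ is a bridge and hence forced into every spanning forest $T_i$; lower-bound the per-iteration good probability by $\Omega(1/k^2)$ and amplify over $r=\Theta(k^2\log n)$ iterations; union-bound over edges. (Your bookkeeping is in fact slightly tighter: since the direct edge $(s,t)$ is one of the $\le 2k-1$ disjoint paths, Menger on $G\setminus\{(s,t)\}$ gives $|C|\le 2k-2$, whereas the paper works with a cut of size $\le 2k-1$; this only changes the constant. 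The paper also separates out $k=1$ as an explicit base case rather than via the $0^0$ convention, but the content is identical.)
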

The proofs of these lemmas appear in the next two subsections. We first use these lemmas to prove 
\Cref{lem:cert-corr} and conclude the proof of~\Cref{thm:k-con-cert}.

\begin{proof}[Proof of \Cref{lem:cert-corr}]
	We first condition on the events in \Cref{clm:connectivity-in-H} and 
	\Cref{clm:edges-in-H} both of which happen with high probability.
	 We also condition on the event that $H$ has the same set of vertices 
	as $G$. This event also happens with high probability because the probability that a given vertex is not in $H$ is $(1 -1/k)^\reps \leq 
	\exp(-200k 
	\cdot \ln n)= n^{-200k}$ and hence by a union bound, all vertices in $G$ are 
	also in $H$ with high probability. All in all, by a union bound, all the above events 
	happen together with high probability. 
	
	We need to show that $H$ is $k$-vertex-connected iff $G$ is $k$-vertex-connected.
	If $H$ is $k$-vertex-connected then $G$ is also $k$-vertex-connected simply because $H$ is a 
	subgraph of $G$ (and crucially on the same set of vertices, namely, it is a spanning subgraph). 
	
	We now assume towards a contradiction that $G$ is $k$-vertex-connected, but $H$ is not.
	This means that there is a vertex cut $X$ of size at most $k-1$ such that there is a partition $(S,X,T)$ of 
	$V$ with no edges between $S$ and $T$ in $H$ (so that removing $X$ disconnects $H$).
	Since $G$ is $k$-vertex-connected, $X$ cannot be a vertex cut in $G$ and thus $G$ has an edge $e=(s,t)$ between $S$ and $T$ (see~\Cref{fig:partition}). 
	
	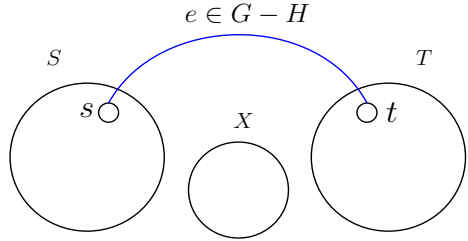
\begin{figure}[h!]
	\centering
	{   \resizebox{180pt}{90pt}{ \tikzset{every picture/.style={line width=0.75pt}} %set default line width to 0.75pt        

\begin{tikzpicture}[x=0.75pt,y=0.75pt,yscale=-1,xscale=1]
	%uncomment if require: \path (0,300); %set diagram left start at 0, and has height of 300
	
	%Shape: Circle [id:dp5245205304598188] 
	\draw   (121,179) .. controls (121,149.18) and (145.18,125) .. (175,125) .. controls (204.82,125) and 
	(229,149.18) .. (229,179) .. controls (229,208.82) and (204.82,233) .. (175,233) .. controls 
	(145.18,233) and (121,208.82) .. (121,179) -- cycle ;
	%Shape: Circle [id:dp6155603624545998] 
	\draw   (332,179) .. controls (332,149.18) and (356.18,125) .. (386,125) .. controls (415.82,125) 
	and (440,149.18) .. (440,179) .. controls (440,208.82) and (415.82,233) .. (386,233) .. controls 
	(356.18,233) and (332,208.82) .. (332,179) -- cycle ;
	%Shape: Circle [id:dp2171011571335204] 
	\draw   (183,146.5) .. controls (183,142.63) and (186.13,139.5) .. (190,139.5) .. controls 
	(193.87,139.5) and (197,142.63) .. (197,146.5) .. controls (197,150.37) and (193.87,153.5) .. 
	(190,153.5) .. controls (186.13,153.5) and (183,150.37) .. (183,146.5) -- cycle ;
	%Shape: Circle [id:dp7125339151364563] 
	\draw   (364,146.5) .. controls (364,142.63) and (367.13,139.5) .. (371,139.5) .. controls 
	(374.87,139.5) and (378,142.63) .. (378,146.5) .. controls (378,150.37) and (374.87,153.5) .. 
	(371,153.5) .. controls (367.13,153.5) and (364,150.37) .. (364,146.5) -- cycle ;
	%Shape: Circle [id:dp6182526352958269] 
	\draw   (246,203) .. controls (246,183.67) and (261.67,168) .. (281,168) .. controls (300.33,168) 
	and (316,183.67) .. (316,203) .. controls (316,222.33) and (300.33,238) .. (281,238) .. controls 
	(261.67,238) and (246,222.33) .. (246,203) -- cycle ;
	%Curve Lines [id:da03595061358336027] 
	\draw [draw=blue]   (190,139.5) .. controls (225,74) and (336,72) .. (371,139.5) ;
	
	% Text Node
	\draw (145,99) node [anchor=north west][inner sep=0.75pt]   [align=left] {\large{$S$}};
	% Text Node
	\draw (404,99) node [anchor=north west][inner sep=0.75pt]   [align=left] {\large{$T$}};
	% Text Node
	\draw (383,138) node [anchor=north west][inner sep=0.75pt]   [align=left] {\LARGE{$t$}};
	% Text Node
	\draw (168,138) node [anchor=north west][inner sep=0.75pt]   [align=left] {\LARGE{$s$}};
	% Text Node
	\draw (276,145) node [anchor=north west][inner sep=0.75pt]   [align=left] {\large{$X$}};
	% Text Node
	\draw (242,65.5) node [anchor=north west][inner sep=0.75pt]   [align=left] {\Large{$e \in G-H$}};

\end{tikzpicture}}  }
	\caption{An illustration of the partition $(S,X,T)$ of vertices of $G$ and $H$. There are no edges  between $S$ and $T$ in $H$, 
	while $G$ has at least one edge $e=(s,t)$ between $S$ and $T$, to ensure its $k$-vertex-connectivity as $\card{X} < k$. 
		\label{fig:partition}}
\end{figure}
	\noindent
	We now consider two cases.
	\begin{itemize}
		\item Case 1: $s$ and $t$ have at least $2k$ vertex-disjoint paths between them in $G$. \\
		By conditioning on the event of \Cref{clm:connectivity-in-H}, we can say that $s$ and $t$ have 
		at least $k$ vertex-disjoint 
		paths in $H$.
		Deleting $X$ can remove at most $\card{X} \leq k-1$ of these paths in $H-X$.
		This implies that there is still an $s$-$t$ path in $H-X$ and thus there is an edge between $S$ 
		and $T$ in $H-X$, a contradiction. 
		
		\item Case 2: $s$ and $t$ have less than $2k$ vertex-disjoint paths between them in $G$. \\
		Since there are fewer than $2k$ vertex-disjoint paths between $s$ and $t$ in 
		$G$, by conditioning on the event of \Cref{clm:edges-in-H}, $e$ would be preserved in $H$, a contradiction 
		with $H$ having no edge between $S$ and $T$. 
	\end{itemize}	
	In conclusion, we get that $H$ is a certificate of $k$-vertex-connectivity for $G$ with high probability.
\end{proof}

\Cref{thm:k-con-cert} now follows immediately from~\Cref{lem:cert-space} and~\Cref{lem:cert-corr}.

	Before moving on from this section, we present the following corollary 
	of~\Cref{alg:cert} that allows for using this algorithm
	for some other related problems in dynamic streams as well. 
	
	\begin{corollary}\label{cor:k-con-extension}
		The subgraph $H$ output by~\Cref{alg:cert} with high probability satisfies the following guarantees: 
		\begin{enumerate}[label=$(\roman*)$]
			\item For any pair of vertices $s,t$ in $G$, there are at least $k$ vertex-disjoint $s$-$t$ paths in $G$ iff there at least $k$ vertex-disjoint $s$-$t$ paths in $H$ (this holds even if $G$ is not $k$-vertex-connected). 
			\item Every vertex cut of $H$ with size less than $k$ is a vertex cut in $G$ and vice versa (this means all vertex cuts of $G$ are preserved in $H$
			as long as their size is less than $k$). 
		\end{enumerate}
	\end{corollary}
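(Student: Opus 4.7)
My plan is to re-run the contradiction argument of \Cref{lem:cert-corr} with the hypothesis ``$G$ is $k$-vertex-connected'' replaced by either ``$s,t$ are $k$-vertex-connected in $G$'' (for part $(i)$) or ``$X$ is not a vertex cut of $G$'' (for part $(ii)$). Throughout, I would condition on the joint high-probability event that \Cref{clm:connectivity-in-H} and \Cref{clm:edges-in-H} both hold and that $V(H)=V(G)$, which occurs with high probability by the same union bound carried out in \Cref{lem:cert-corr}.

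For part $(i)$, the direction ``$\geq k$ vertex-disjoint $s$-$t$ paths in $H$ $\Rightarrow$ $\geq k$ in $G$'' is immediate because $H$ is a spanning subgraph of $G$. For the converse, suppose $s,t$ have $\geq k$ vertex-disjoint paths in $G$ but strictly fewer in $H$. By Menger's theorem, there is a set $X$ with $|X|<k$ and $s,t \notin X$ separating $s$ from $t$ in $H$, inducing a partition $V\setminus X = S \sqcup T$ with $s\in S$, $t\in T$, and no $S$-$T$ edges in $H$. Since $|X|<k$, Menger's theorem applied in $G$ guarantees an $s$-$t$ path in $G-X$, so there must exist some edge $e=(u,v)\in G$ with $u\in S$ and $v\in T$. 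I now split into the two cases of \Cref{lem:cert-corr}: if $u,v$ are at least $2k$-vertex-connected in $G$, then \Cref{clm:connectivity-in-H} yields $\geq k$ vertex-disjoint $u$-$v$ paths in $H$, so $|X|<k$ cannot separate $u$ from $v$ in $H$---contradicting the absence of $S$-$T$ edges in $H-X$; otherwise, \Cref{clm:edges-in-H} places $e$ itself in $H$, which is the same contradiction.

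Part $(ii)$ uses the identical analysis. The $G\to H$ direction is immediate since $H$ is a spanning subgraph: any $X$ that disconnects $G$ also disconnects $H$. For $H\to G$, suppose $X$ with $|X|<k$ is a vertex cut of $H$ but not of $G$; then there is a partition $V\setminus X = S\sqcup T$ with both sides nonempty and no $S$-$T$ edges in $H$, while $G-X$ is connected and therefore contains some edge $e=(u,v)$ with $u\in S$, $v\in T$. The same dichotomy---\Cref{clm:connectivity-in-H} when $u,v$ are at least $2k$-vertex-connected in $G$, and \Cref{clm:edges-in-H} otherwise---produces a contradiction.

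The point worth emphasizing, and essentially the only non-routine step, is that the contradiction engine of \Cref{lem:cert-corr} never actually consumes the global $k$-vertex-connectivity of $G$; it only uses (a)~the existence of a single crossing edge $e$ of the putative $H$-cut and (b)~the two sampling-based guarantees. Once this is noticed, both statements of the corollary follow by the same argument, with only a mild rephrasing in terms of pairwise connectivity and in terms of arbitrary small cuts rather than the global connectivity of $G$.
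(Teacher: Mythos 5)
Your reading of \Cref{lem:cert-corr} is exactly right: its contradiction engine only needs a single crossing edge of the putative cut in $H$ together with the two sampling guarantees (\Cref{clm:connectivity-in-H} and \Cref{clm:edges-in-H}), never the global $k$-vertex-connectivity of $G$. This is all the paper itself offers, since it declares the corollary's proof ``identical to that of \Cref{lem:cert-corr}'' and omits it. Part $(ii)$ and the $H\Rightarrow G$ direction of part $(i)$ are correct as you wrote them.

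There is, however, a gap in the $G\Rightarrow H$ direction of part $(i)$. You invoke Menger's theorem to extract a vertex set $X$ with $\card{X}<k$ separating $s$ from $t$ in $H$, but Menger's theorem (\Cref{prop:menger}) applies only to \emph{non-adjacent} vertices, and $s,t$ may well be adjacent in $H$. In fact this is exactly the regime one must face: \Cref{clm:edges-in-H} forces $(s,t)\in H$ whenever $(s,t)\in G$ and $s,t$ have fewer than $2k$ vertex-disjoint paths in $G$. When $(s,t)\in H$, no such separator $X$ exists and your argument never starts. This is precisely why the corollary's proof is not quite word-for-word the same as \Cref{lem:cert-corr}, where the cut is obtained from the definition of ``not $k$-vertex-connected'' with no adjacency issue.

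The repair is short but must be stated. If $(s,t)\in H$ and $s,t$ have fewer than $k$ vertex-disjoint paths in $H$, apply Menger's theorem in $H-(s,t)$ to get a separator $X$ with $\card{X}\leq k-2$ and the corresponding partition $(S,X,T)$. Since $s,t$ have at least $k-1>\card{X}$ vertex-disjoint paths in $G-(s,t)$, there is a crossing edge $e=(u,v)\neq(s,t)$ in $G-(s,t)-X$. In the case $p_G(u,v)<2k$, \Cref{clm:edges-in-H} gives $e\in H$, and since $e\neq(s,t)$ it lies in $H-(s,t)-X$, contradicting the partition. In the case $p_G(u,v)\geq 2k$, one extra observation is needed: deleting the $\leq k-2$ vertices of $X$ and the single edge $(s,t)$ from $H$ destroys at most $k-1$ of the at least $k$ vertex-disjoint $u$-$v$ paths guaranteed by \Cref{clm:connectivity-in-H}, so some $u$-$v$ path survives in $H-(s,t)-X$, again producing a crossing edge and the contradiction. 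With this case added, your proof is complete and follows the same approach the paper intends.
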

\noindent
The proof of this corollary is identical to that of~\Cref{lem:cert-corr} and is thus omitted. 	

\subsection{Proof of \texorpdfstring{\Cref{clm:connectivity-in-H}}{Lemma} }
We prove \Cref{clm:connectivity-in-H} in this part following the same approach as in \cite{guha2015vertex}. 
For this proof, without loss of generality, we can assume that $k > 1$: for 
$k=1$, each graph $G_i$ is the same as $G$ and thus the algorithm in 
\Cref{prop:spanning-forest} computes an $s$-$t$ path 
which will be added to $H$,  trivially implying the proof.

Fix any pair of vertices $s,t$ with at least $2k$ vertex-disjoint paths between them.
We choose an arbitrary set $X$ of vertices with size $k-1$, and the goal is to show that $s$ and $t$ remain 
connected in the graph $H-X$ with very high probability. 
We do so by showing that out of the at least $k$ vertex-disjoint paths between $s$ and $t$ in $G-X$, with probability $1-n^{-\Theta(k)}$, at least one of them is 
entirely sampled as part of the \emph{subset} of $G_i$'s for $i \in [r]$ that do not contain any vertex from $X$.  
This will be sufficient to prove existence of a $s$-$t$ path in $H-X$.
A union bound over the $\binom{n}{k-1}$ choices of $X$ and $\binom{n}{2}$ pairs $s,t$ 
concludes the proof.

Fix $X$ as a set of $k-1$ vertices that contains neither $s$ nor $t$. Define:
\begin{align}
I(X):=\set{i \in [\reps]: V_i \cap X =\emptyset}; \label{eq:index-set}
\end{align}
that is, the indices of sampled graphs in $G_1,\ldots,G_r$ that contain no vertex from $X$.
We first argue that $\card{I(X)}$ is large with high probability. 

\begin{claim}\label{clm:I(X)}
	$\Pr\paren{\card{I(X)} \leq \reps/8} \leq n^{-5k}$.
\end{claim}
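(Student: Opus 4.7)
}
The plan is a direct Chernoff-bound calculation on the sum of independent Bernoulli indicators. First, I would observe that for each fixed $i \in [\reps]$, whether $V_i \cap X = \emptyset$ depends only on the independent $1/k$-sampling decisions made for the $|X| = k-1$ vertices of $X$ in iteration $i$, and these decisions are independent across different iterations. Hence the indicator variables $Y_i := \mathbf{1}\{V_i \cap X = \emptyset\}$ are mutually independent Bernoulli random variables with common parameter
\[
p \;=\; \Pr(V_i \cap X = \emptyset) \;=\; \Paren{1-\tfrac{1}{k}}^{k-1} \;\geq\; \tfrac{1}{e},
\]
where the last inequality uses the standard fact that $(1-1/k)^{k-1}$ is a decreasing sequence in $k$ that converges to $1/e$ from above (and we are in the case $k \geq 2$, with the $k=1$ case already handled).

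Next, I would write $|I(X)| = \sum_{i=1}^{\reps} Y_i$, so $\expect{|I(X)|} \geq \reps/e$. Set $\mu_L := \reps/e$. Since $\reps/8 < \mu_L$, we can write $\reps/8 = (1-\eps)\mu_L$ for a constant $\eps = 1 - e/8 \in (0,1)$ (numerically $\eps \approx 0.66$). Applying the lower-tail Chernoff bound from~\Cref{prop:chernoff} with $b=1$ yields
\[
\Pr\!\Paren{|I(X)| \leq \reps/8} \;\leq\; \exp\!\Paren{-\tfrac{\eps^2\, \mu_L}{2+\eps}} \;\leq\; \exp\!\Paren{-c \cdot \reps},
\]
for an absolute constant $c > 0$ (any $c \leq \eps^2/(e(2+\eps))$ works; a routine check gives $c \geq 1/20$, comfortably).

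Finally, plugging in $\reps = 200 k^2 \ln n$ gives $\exp(-c\,\reps) \leq \exp(-10 k^2 \ln n) = n^{-10 k^2} \leq n^{-5k}$ for all $k \geq 1$, which is exactly the claimed bound. No step here is a genuine obstacle: the only point that requires any care is to verify that the chosen $\eps$ is a constant strictly in $(0,1)$ (so that the Chernoff exponent is a constant times $\reps$) and to track constants so the final exponent beats $5k\ln n$; both are handled by choosing the constant $200$ in the definition of $\reps$ generously, as was already done.
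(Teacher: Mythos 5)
Your proof is correct and follows the same approach as the paper: write $|I(X)|$ as a sum of $\reps$ i.i.d.\ Bernoulli indicators $Y_i = \mathbf{1}\{V_i \cap X = \emptyset\}$, lower-bound the expectation via $(1-1/k)^{k-1}$, and apply the lower-tail Chernoff bound from \Cref{prop:chernoff}. The only (immaterial) difference is that you use the sharper lower bound $(1-1/k)^{k-1} \geq 1/e$ with $\eps = 1 - e/8$, where the paper uses the coarser $(1-1/k)^{k-1} \geq 1/4$ with $\eps = 1/2$; both comfortably yield the claimed $n^{-5k}$ bound.
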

\begin{proof}
	Fix any index $i \in [r]$ and a vertex $v \in X$. The probability that $v$ is not sampled in $V_i$ is $(1-1/k)$ by definition and thus, 
	\[
		\Pr\paren{V_i \cap X = \emptyset} = (1-1/k)^{k-1} \geq 1/4,
	\]
	given that $k > 1$ (as argued earlier) and the choice of vertices is independent in $V_i$. Therefore, we have, 
	\[
	\Exp\card{I(X)} = \reps \cdot (1-1/k)^{k-1} \geq \reps/4.
	\]
	By an application of the Chernoff bound (\Cref{prop:chernoff}) with $\mu_L = \reps/4$ and $\eps=1/2$, we have, 
	\[
		\prob{\card{I(X)} \leq \reps/8} \leq \exp(-\reps/4 \cdot 1/10) < n^{-5k}. \qedhere
	\]	
\end{proof}

	In the rest of the proof we condition on the event that $\card{I(X)} \geq \reps/8$. To continue, we need some definitions. 
	There are more than $k$ vertex-disjoint paths between $s$ and $t$ in $G-X$ since there 
	were $2k$ of them in $G$ and only $k-1$ vertices (set $X$) are deleted. 
	Choose $k$ of them arbitrarily denoted by $P_1(X), \ldots , P_k(X)$.
	For each path $P_j(X)$, let $a_j$ be the edge incident to $s$, $B_j$ be the remaining path until the 
	final edge $c_j$ which is incident to $t$ -- it is possible for $a_i$ and $c_i$ to be the same and $B_i$ be empty (see \Cref{fig:aBc} for an illustration).

	\begin{figure}[h!]
	\centering
	{   \resizebox{150pt}{80pt}{ %\begin {tikzpicture}[-latex ,auto ,node distance =12 mm and 20 mm ,on grid, semithick ,
%state/.style ={ circle ,top color =white , bottom color = white , draw,black , text=black , minimum 
%width =5 mm},
%labelstate/.style ={ circle ,top color =white , bottom color = white , draw,white , text=black , 
%minimum 
%	width =1 cm}
%]
%
%\node[state] (S) {$s$};
%\node[state] (V1) [right =of S] {};
%\node[labelstate] (V2) [right =of V1] {};
%\node[state] (V3) [right =of V2] {};
%\node[state] (T) [right=of V3] {$t$};
%
%\node[state] (U1) [above right =of S] {};
%\node[labelstate] (U2) [right =of U1] {};
%\node[state] (U3) [right =of U2] {};
%
%\node[state] (W1) [below right =of S] {};
%\node[labelstate] (W2) [right =of W1] {};
%\node[state] (W3) [right =of W2] {};
%
%\path[every node/.style={font=\sffamily\small}]
%(S) edge [bend right = 0] node {} (V1)
%(V1) edge [bend  left] node {} (V2);
%\end{tikzpicture}

\begin{tikzpicture}[
	shorten >=1pt, auto, thick,
	node distance=1cm,
	state/.style={circle,draw,fill=white,font=\sffamily\Large\bfseries},
	labelstate/.style={circle,draw,fill=white,  draw,white, text=black, font=\sffamily\Large\bfseries}
	]
\node[state] (S) {$s$};
\node[state] (V1) [right =of S] {};
%\node[labelstate] (V2) [right =of V1] {$\ldots$};
\node[state] (V3) [right =of V1] {};
\node[state] (T) [right=of V3] {$t$};

\node[state] (U1) [above =of V1] {};
%\node[labelstate] (U2) [above =of V2] {$\ldots$};
\node[state] (U3) [above =of V3] {};

\node[state] (W1) [below =of V1] {};
%\node[labelstate] (W2) [below =of V2] {$\ldots$};
\node[state] (W3) [below =of V3] {};

\path[every node/.style={font=\sffamily\small}]
(S) edge [bend right = 0] node {$a_2$} (V1)
(V1) edge [bend  right = 0, decorate,decoration=snake] node {$B_2$} node [below] 
{\textbf{$\vdots$}} (V3)
%(V2) edge [bend  right = 0] node {$B_2$} (V3)
(V3) edge [bend  right = 0] node {$c_2$} (T)

(S) edge [bend right = 0] node {$a_1$} (U1)
(U1) edge [bend  right = 0, decorate,decoration=snake] node {$B_1$} (U3)
%(U2) edge [bend  right = 0] node {$B_1$} (U3)
(U3) edge [bend  right = 0] node {$c_1$} (T)

(S) edge [bend right = 0] node {$a_k$} (W1)
(W1) edge [bend  right = 0, decorate,decoration=snake] node {$B_k$} (W3)
%(W2) edge [bend  right = 0] node {$B_k$} (W3)
(W3) edge [bend  right = 0] node {$c_k$} (T)
;
\end{tikzpicture}}  }
	\caption{An illustration of the $s$-$t$ paths $P_1(X),P_2(X),\ldots,P_k(X)$.
		Each $P_j(X)$ consists of an edge $a_j$ from $s$, a path $B_j$ until the last edge $c_j$ to $t$.
		\label{fig:aBc}}
	\end{figure}
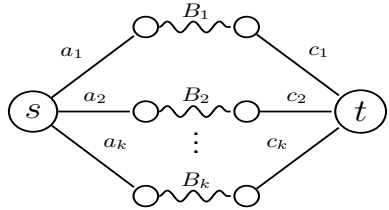

	We define the notion of ``preserving'' a path. 

		\begin{definition}
		Let $G_X := \cup_{i \in I(X)} G_i$ be the union of graphs indexed in~\Cref{eq:index-set}. We say that 
		a path $P$ in $G-X$ is \textnormal{\textbf{preserved}} in $G_X$ iff for every edge $e \in P$, 
		there exists at least one $i \in I(X)$ such that $e \in G_i$; in other words, the entire path $P$ belongs to $G_X$. 
	\end{definition}
		
	We are going to show that with high probability, at least one path $P_j(X)$ for $j \in [k]$ is preserved by $G_X$. Before that, we have the following claim that allows 
	us to use this property to conclude the proof. 
	\begin{claim}\label{clm:st-path}
		If any $s$-$t$ path $P_j(X)$ for $j \in [k]$ is preserved in $G_X$ then $s$ and $t$ are connected in 
		$H-X$. 
	\end{claim}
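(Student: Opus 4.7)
The plan is to unfold the preserved path $P_j(X)$ edge by edge and replace each edge by a corresponding walk in $H - X$, exploiting the definition of ``preserved'' together with the spanning-forest property of each $T_i$.

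First, I would write $P_j(X) = (s = u_0, u_1, \ldots, u_m = t)$, noting that every $u_\ell$ lies in $V \setminus X$ since $P_j(X)$ is a path in $G - X$. Because $P_j(X)$ is preserved in $G_X$, for every edge $e_\ell = (u_\ell, u_{\ell+1})$ of the path there exists an index $i_\ell \in I(X)$ with $e_\ell \in G_{i_\ell}$; in particular $u_\ell, u_{\ell+1} \in V_{i_\ell}$, and by definition of $I(X)$ we have $V_{i_\ell} \cap X = \emptyset$.

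Next, I would invoke the spanning-forest property: since $T_{i_\ell}$ is a spanning forest of $G_{i_\ell}$ and $(u_\ell, u_{\ell+1})$ is an edge of $G_{i_\ell}$, its two endpoints must lie in the same tree of $T_{i_\ell}$. Hence $T_{i_\ell}$ contains a $u_\ell$-$u_{\ell+1}$ walk $Q_\ell$ whose vertices all belong to $V_{i_\ell}$ and thus avoid $X$. Since $T_{i_\ell} \subseteq H$, the walk $Q_\ell$ lies entirely in $H - X$.

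Finally, concatenating $Q_0, Q_1, \ldots, Q_{m-1}$ produces a walk in $H - X$ from $s$ to $t$, which establishes the desired connectivity. The only mildly subtle point is that a preserved edge $e_\ell$ need not itself belong to $T_{i_\ell}$; however, the spanning-forest guarantee provides a replacement path between its endpoints that stays inside $V_{i_\ell}$, and hence outside $X$, which is exactly what is needed. No probabilistic reasoning enters this claim — it is a purely deterministic consequence of the spanning-forest construction on vertex subsets disjoint from $X$.
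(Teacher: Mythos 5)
Your proof is correct and follows essentially the same approach as the paper: for each edge of the preserved path, you use the spanning-forest property of $T_{i_\ell}$ together with the fact that $V_{i_\ell}\cap X=\emptyset$ to obtain a replacement walk in $H-X$, then concatenate. The subtle point you flag (a preserved edge need not lie in $T_{i_\ell}$ itself) is exactly the point the paper's proof also handles via connectivity in the spanning forest.
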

	\begin{proof}
		Given that $P := P_j(X)$ is preserved, we have that for any edge $e=(u,v) \in P$, there is some graph $G_i$ for $i \in I(X)$ that contains $e$. 
		This means that $u,v$ are connected in $G_i$ which in turn implies that the spanning forest $T_i$ of 
		$G_i$ contains a path between $u$ and $v$. 
		Moreover, since $i \in I(X)$, we know that $G_i$ and hence $T_i$ contain no vertices of $X$ and thus $u$ and $v$ are connected in $T_i - X$ as well. 
		Stitching together these $u$-$v$ paths for every edge $(u,v) \in P$ then gives us a walk between $s$ and $t$ in $H-X$,  implying that $s$ and $t$ are connected in $H-X$. 
	\end{proof}

	We will now prove that some path $P_j(X)$ for $j \in [k]$ is preserved with very high probability.
	\begin{claim}\label{clm:path-prob}
		Conditioned on $\card{I(X)} \geq r/8$, the followings three probabilities are each at most $n^{-2k}$:
		\begin{enumerate}
			\item $\prob{a_j \notin G_X \text{ for at least $k/3$ values of } j \in [k]}$;
			\item $\prob{B_j \not\subseteq G_X \text{ for at least $k/3$ values of } j \in [k]}$;			
			\item $\prob{c_j \notin G_X \text{ for at least $k/3$ values of } j \in [k]}$.
		\end{enumerate}
	\end{claim}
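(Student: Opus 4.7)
The plan is to prove the three bounds separately; since the third is symmetric to the first (replacing $s$ with $t$ and the $s$-neighbor $u_j$ with the $t$-neighbor $v_j$ on $P_j(X)$), I will focus on the first and second. Throughout, I continue to condition on $|I(X)| \geq \reps/8 = 25 k^2 \ln n$ as in the statement, and the goal in each case is to reduce the event to a sum of independent indicators and then apply a binomial-tail bound over subsets of $[k]$ of size at least $k/3$.

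For the second bound (on the $B_j$'s), my plan is to exploit the fact that the paths $P_1(X), \ldots, P_k(X)$ are vertex-disjoint in $G-X$, so the interior subpaths $B_1, \ldots, B_k$ use pairwise disjoint vertex sets (none of which intersect $X \cup \{s,t\}$). Consequently, conditioned on $I(X)$, the events $\{B_j \subseteq G_X\}$ depend on disjoint sets of independent vertex samplings and are mutually independent across $j$. For each $j$, a union bound over the at most $n$ edges of $B_j$, combined with the observation that an edge $e$ with both endpoints outside $X$ lies in $G_i$ (for $i \in I(X)$) only if both endpoints are sampled in $V_i$ (probability $1/k^2$), yields $\Pr(B_j \not\subseteq G_X \mid I(X)) \leq n \cdot (1 - 1/k^2)^{\reps/8} \leq n^{-24}$. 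A binomial tail over the $\binom{k}{k/3} \leq n^{k/3}$ ``bad'' subsets of $[k]$ then gives the claimed $n^{-2k}$ bound comfortably.

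For the first bound (on the $a_j$'s), the obstacle is that all edges $a_j = (s, u_j)$ share the vertex $s$, so a priori the events $\{a_j \in G_X\}$ are correlated through the sampling of $s$. To decouple them, I plan to additionally condition on the random set $J := \{i \in I(X) : s \in V_i\}$. Since $s \notin X$, for each $i \in I(X)$ the vertex $s$ is sampled in $V_i$ independently with probability $1/k$, so $\Exp[|J| \mid I(X)] \geq \reps/(8k) = 25 k \ln n$; a Chernoff bound then forces $|J| \geq 12 k \ln n$ except with failure probability at most $n^{-\Omega(k)}$. Given $J$, the event $a_j \in G_X$ is equivalent to $\{u_j \in V_i \text{ for some } i \in J\}$; and since the vertices $u_1, \ldots, u_k$ are pairwise distinct (by vertex-disjointness of the paths in $G-X$), these events are mutually independent across $j$, each with failure probability at most $(1 - 1/k)^{|J|} \leq n^{-12}$. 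Another binomial tail bound over subsets of $[k]$ of size at least $k/3$ then closes the argument.

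The main technical obstacle lies in parts (1) and (3): the shared endpoints $s$ and $t$ seem to create unavoidable correlation across the $k$ events, but after the secondary conditioning on $J$ (or its $t$-analog) the remaining randomness is over the distinct interior endpoints $u_j$ (respectively $v_j$), which provides exactly the independence needed for the binomial-tail bound to beat $\binom{k}{k/3}$ comfortably.
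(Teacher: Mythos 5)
Your proposal is correct and follows essentially the same strategy as the paper's proof: for the $B_j$'s, a union bound over edges combined with vertex-disjointness of the interiors to get independence across $j$ and then a binomial tail; for the $a_j$'s (and symmetrically the $c_j$'s), the key secondary conditioning on $J := \{i \in I(X) : s \in V_i\}$ is exactly the paper's set $I_s(X)$, followed by a Chernoff bound on $|J|$ and then independence across the distinct neighbors $u_j$ of $s$. Only the specific numeric thresholds in the Chernoff step differ, which is immaterial.
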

	\begin{proof}	 
		To start the proof, note that even conditioned on a choice of $I(X)$, the vertices in each path $P_j(X)$ appear independently in each graph $G_i$ for $i \in I(X)$. This is because these paths do not intersect with $X$ and 
		by the independence in sampling of each graph $G_i$ for $i \in [r]$.  Moreover, given that these paths are vertex-disjoint (although share $s$ and $t$), 
		the choices of their \emph{inner} vertices across each graph $G_i$ for $i \in [r]$, are independent. We 
		crucially use these properties in this proof. 
	
		An edge is present in $G_i$ if both of its endpoints are sampled which happens with probability 
		$1/k^2$.
		Thus, each edge in $B_j$ is not present in $G_i$ with probability 
		$(1-1/k^2)$ and hence is not 
		present 
		in $G_X$ with probability $(1-1/k^2)^{\card{I(X)}}$. 
		Hence, by the union bound, 
		\[
		\prob{B_j \not\subseteq  G_X} \leq \card{B_j} \cdot \left(1-1/k^2 \right)^{\card{I(X)}} \leq n \cdot 
		\left(1-1/k^2\right)^{r/8} \leq n \cdot \exp\paren{-200k^2\ln{n}/8k^2} = n^{-24}.
		\]
		Finally, note that since the paths $B_j$ for $j \in [k]$ are vertex-disjoint, the probability of the above event is independent for each one. Thus, 
		\[
			\prob{B_j \not\subseteq G_X \text{ for at least $k/3$ values of } j \in [k]} \leq 
			\binom{k}{k/3} \paren{n^{-24}}^{k/3} \leq2^{k} \cdot  n^{-8k} \leq n^{-7k}.
		\]
		As such, the entire path $B_j$ will lie inside $G_X$ for at least $2k/3$ 
		values of $j \in [k]$ with very high probability.
		
		The analysis for $a_j$'s and $c_j$'s is slightly different since one of their endpoints, namely, $s$ and $t$, respectively, is shared across all of them. 
		But the proofs for $a_j$'s and $c_j$'s are entirely symmetric, so we just consider 
		$a_j$'s.
		Consider the set of indices 
		\[
		I_s(X) := I(X)~\cap~\set{i \in [\reps]: s \in G_i};
		\] 
		that is the graphs in $I(X)$ which additionally contain the vertex $s$. 
		For $i \in I_s(X)$, the graph $G_i$ contains the vertex $s$ and but no vertex from $X$.
		We know that $\Exp\card{I_s(X)} = \card{I(X)}/k$ since probability of sampling vertex $s$ in any $G_i$ is 
		$1/k$.
		By an application of the Chernoff bound (\Cref{prop:chernoff}) with $\eps=0.5$, we have, 
		\begin{align*}
			\prob{\card{I_s(X)} \leq \card{I(X)}/2k} \leq  \exp\paren{-\card{I(X)}/10k} \leq \exp(-200k^2\ln 
			n/80k) = n^{-2.5k}. 
		\end{align*} 
		Moreover, for any $i \in I_s(X)$, the probability that $a_j$ is in $G_i$ is $1/k$. Thus, for any fixed $j \in [k]$, 
		\[
			\prob{a_j \notin G_X} = (1-1/k)^{\card{I_s(X)}}.
		\]
		Combining the above two equations, we have, 
		\begin{align*}
			&\prob{a_j \not\in G_X \text{ for at least $k/3$ values of } j \in [k]} \\
			&\hspace{20pt}\leq \Pr\Paren{\card{I_s(X)} \leq \card{I(X)}/2k} + \Pr\Paren{a_j \not\in G_X \text{ for at least $k/3$ values of } j \mid \card{I_s(X)} > \card{I(X)}/2k} \tag{by the law of total probability} \\
			&\hspace{20pt}\leq n^{-2.5k} + \binom{k}{k/3} \paren{(1-1/k)^{\card{I(X)}/2k}}^{k/3} \leq 
			n^{-2.5k} + 2^{k} \cdot \left(\exp\paren{-200k^2\ln{n}/16k^2}\right)^{k/3} < n^{-2k}. 
		\end{align*}
		The same property also holds for $c_j$'s by symmetry, concluding the proof. 
	\end{proof}
	
	By union bound over the events of~\Cref{clm:I(X),clm:path-prob}, we have that there exists an index $j \in [k]$ such that the path $P_j(X)$ is preserved in $G_X$. 
	Thus, by~\Cref{clm:st-path}, for a fixed choice of $X$, and $s,t$, the probability that $s$ and $t$ are 
	not connected in $H-X$
	is at most $4n^{-2k}$. A union bound over the choices of $X$ and $s,t$, then implies that the probability that even one such choice of $X$ and $s,t$ exists is at most 
	\[
		\binom{n}{k-1} \cdot \binom{n}{2} \cdot 4n^{-2k} \leq n^{k+1} \cdot 4n^{-2k} = 4n^{-k+1} <  4n^{-1},
	\]
	since $k \geq 2$. This completes the 
	proof of~\Cref{clm:connectivity-in-H}.

\subsection{Proof of \texorpdfstring{\Cref{clm:edges-in-H}}{Lemma} }
	We now prove \Cref{clm:edges-in-H}.
	For this proof also, without loss of generality, we can assume that $k > 1$: for 
	$k=1$, each graph $G_i$ is the same as $G$ and thus the algorithm in 
	\Cref{prop:spanning-forest} computes the only $s$-$t$ path, namely, the edge $(s,t)$ (as $s$ and $t$ can only be $1$-connected through the edge $(s,t)$)
	which will be added to $H$, thus trivially implying the proof.  We now consider the main case. 
	
	Fix any pair of vertices $s,t \in G$ which have less than $2k$ vertex-disjoint paths between them. We know 
	that deleting the 
	edge $(s,t)$ and some set of vertices $X$ of size less than $2k$ should disconnect $s$ and $t$.
	For any $i \in [r]$, we call the graph $G_i$ \textbf{good} if it samples both $s$ and $t$ and does not sample any 
	vertex from $X$. See~\Cref{fig:partition_LowCon} for an illustration.

	\begin{figure}[ht!]
	\centering
	{   \resizebox{170pt}{110pt}{ \tikzset{every picture/.style={line width=0.75pt}} %set default line width to 0.75pt        

\begin{tikzpicture}[x=0.75pt,y=0.75pt,yscale=-1,xscale=1]
	%uncomment if require: \path (0,300); %set diagram left start at 0, and has height of 300
	
	%Shape: Circle [id:dp2171011571335204] 
	\draw   (180.75,148.75) .. controls (180.75,143.64) and (184.89,139.5) .. (190,139.5) .. controls 
	(195.11,139.5) and (199.25,143.64) .. (199.25,148.75) .. controls (199.25,153.86) and (195.11,158) 
	.. (190,158) .. controls (184.89,158) and (180.75,153.86) .. (180.75,148.75) -- cycle ;
	%Curve Lines [id:da03595061358336027] 
	\draw  [draw=blue]  (190,139.5) .. controls (225,74) and (335,72) .. (370,139.5) ;
	%Shape: Circle [id:dp6494366776183813] 
	\draw   (274,176.5) .. controls (274,172.63) and (277.13,169.5) .. (281,169.5) .. controls 
	(284.87,169.5) and (288,172.63) .. (288,176.5) .. controls (288,180.37) and (284.87,183.5) .. 
	(281,183.5) .. controls (277.13,183.5) and (274,180.37) .. (274,176.5) -- cycle ;
	%Shape: Circle [id:dp3026905736811003] 
	\draw   (274,203.5) .. controls (274,199.63) and (277.13,196.5) .. (281,196.5) .. controls 
	(284.87,196.5) and (288,199.63) .. (288,203.5) .. controls (288,207.37) and (284.87,210.5) .. 
	(281,210.5) .. controls (277.13,210.5) and (274,207.37) .. (274,203.5) -- cycle ;
	
	%Shape: Polygon Curved [id:ds32399922720428487] 
	\begin{scope}[on background layer]
	\draw [draw=blue, fill=blue!10, line width=1pt]  (274,50) .. controls (361,54) and (416,142.5) .. 
	(396,162.5) .. controls (376,182.5) and 
	(303,128) .. (271,140) .. controls (239,152) and (218,228) .. (174,194) .. controls (130,160) and 
	(187,46) .. (274,50) -- cycle ;
	\end{scope}
	
	%Shape: Ellipse [id:dp9308794421428075] 
	\draw   (281,161.5) .. controls (292.05,161.5) and (301,182.21) .. (301,207.75) .. controls 
	(301,233.29) and (292.05,254) .. (281,254) .. controls (269.95,254) and (261,233.29) .. 
	(261,207.75) .. controls (261,182.21) and (269.95,161.5) .. (281,161.5) -- cycle ;
	%Shape: Circle [id:dp09414894967698273] 
	\draw   (274,239.5) .. controls (274,235.63) and (277.13,232.5) .. (281,232.5) .. controls 
	(284.87,232.5) and (288,235.63) .. (288,239.5) .. controls (288,243.37) and (284.87,246.5) .. 
	(281,246.5) .. controls (277.13,246.5) and (274,243.37) .. (274,239.5) -- cycle ;
	%Shape: Circle [id:dp3099843839348442] 
	\draw   (360.75,148.75) .. controls (360.75,143.64) and (364.89,139.5) .. (370,139.5) .. controls 
	(375.11,139.5) and (379.25,143.64) .. (379.25,148.75) .. controls (379.25,153.86) and 
	(375.11,158) .. (370,158) .. controls (364.89,158) and (360.75,153.86) .. (360.75,148.75) -- cycle ;
	%Straight Lines [id:da20700725633453687] 
	\draw  [line width=0.75pt, decorate,decoration={snake, segment length=5mm, amplitude=0.5mm}]  
	(199.25,148.75) 
	-- (274,176.5) ;
	%Straight Lines [id:da5951967359992445] 
	\draw   [line width=0.75pt, decorate,decoration={snake, segment length=5mm, amplitude=0.5mm}] 
	(199.25,148.75) -- (274,203.5) ;
	%Straight Lines [id:da2784350079649571] 
	\draw  [line width=0.75pt, decorate,decoration={snake, segment length=5mm, amplitude=0.5mm}]  
	(199.25,148.75) -- (274,239.5) ;
	%Straight Lines [id:da5992291561738305] 
	\draw   [line width=0.75pt, decorate,decoration={snake, segment length=5mm, 
	amplitude=0.5mm}]  
	(360.75,148.75) --  (288,176.5);
	%Straight Lines [id:da5417599795852959] 
	\draw  [line width=0.75pt, decorate,decoration={snake, segment length=5mm, amplitude=0.5mm}]  
	 (360.75,148.75) --  (288,203.5);
	%Straight Lines [id:da31878764126555503] 
	\draw  [line width=0.75pt, decorate,decoration={snake, segment length=5mm, amplitude=0.5mm}]  
	 (360.75,148.75) --  (288,239.5);
	
	% Text Node
	\draw (168,70) node [anchor=north west][inner sep=0.75pt]   [align=left] 
	{\textcolor{blue}{$G_{i^*}$}};
	% Text Node
	\draw (366,142) node [anchor=north west][inner sep=0.75pt]   [align=left] {\large{$t$}};
	% Text Node
	\draw (184,143) node [anchor=north west][inner sep=0.75pt]   [align=left] {\large{$s$}};
	% Text Node
	\draw (273,147) node [anchor=north west][inner sep=0.75pt]   [align=left] {$X$};
	% Text Node
	\draw (267,74) node [anchor=north west][inner sep=0.75pt]   [align=left] {$e$};
	
	% Text Node (dots)
	\draw (277.5,206) node [anchor=north west][inner sep=0.75pt]   [align=left] {\textbf{$\vdots$}};
	
	%snake lines:  [decorate,decoration=snake]
	% {snake, segment length=5mm, amplitude=15mm}
	
\end{tikzpicture}}  }
	\caption{An illustration of a good graph $G_{i^*}$ wherein both vertices $s$ and $t$ are sampled 
	and all the vertices in set $X$ are not. Thus, none of  the $s$-$t$ paths, except for the edge $e$, 
	exist in $G_{i^*}$ since they all pass through $X$. Therefore, the spanning forest $T_{i^*}$ 
	necessarily contains the edge $e=(s,t)$. 
		\label{fig:partition_LowCon}}
\end{figure}
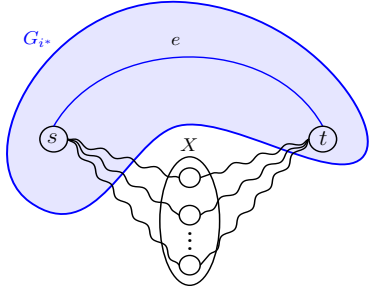

	We have, 
	\[
		\prob{G_i \text{ is good}} = 1/k^2 \cdot (1-1/k)^{2k-1} \geq 1/8k^2 \tag{as $k\geq 2$ so $(1-1/k)^{2k-1} \geq (1/2)^{3}$}.
	\]
	Given the independence of choices of $G_i$ for $i \in [r]$, we have, 
	\[
		\prob{\text{No $G_i$ is good}} \leq (1-1/8k^2)^{r} \leq \exp\paren{-200 k^2 \ln n/8k^2} = n^{-25}.
	\]
	
	Therefore, there is a graph $G_{i^*}$ for $i^* \in [r]$ where $s$ and $t$ are sampled but $X$ is 
	not (see \Cref{fig:partition_LowCon}).
	This means that the spanning forest $T_{i^*}$ has to contain the edge $(s,t)$ as 
	there is no other 
	path between $s$ and $t$ (we have effectively ``deleted'' $X$ by not sampling it).
	Thus, the edge $(s,t)$ belongs to $H$ with probability at least $1-n^{-25}$. 
	A union bound over all possible pairs $s,t \in G$ concludes the proof.

% !TeX root = main.tex 
%!TEX root = main.tex

\section{The Dynamic Streaming Algorithm}\label{sec:Algorithm}

We  present our single pass dynamic streaming algorithm for $k$-vertex-connectivity in this section. 
The algorithm outputs a certificate of $k$-vertex-connectivity for the input graph at the end of the 
stream. 
Thus, by the definition of the certificate, to know whether or not the input graph is 
$k$-vertex-connected, it suffices to test if the certificate is $k$-vertex-connected, which
can be done at the end of the stream using any offline algorithm. 
The following theorem formalizes \Cref{res:alg}. 
\begin{theorem}\label{thm:one-pass-con-alg}
	There is a randomized dynamic streaming algorithm that given an integer $k\geq 1$ 
	before the stream and a graph $G=(V,E)$ in the stream, outputs a certificate $H$ of 
	$k$-vertex-connectivity of $G$ with high probability using $O(kn \cdot \log^4 n)$ bits of 
	space.
\end{theorem}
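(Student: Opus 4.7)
The plan is to implement \Cref{alg:cert} directly in the dynamic streaming model, using the AGM sketch from \Cref{prop:spanning-forest} to maintain each spanning forest $T_i$ of $G_i = G[V_i]$. Correctness then follows from \Cref{thm:k-con-cert}, and the main work is the space accounting together with proper handling of the randomness.

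The streaming algorithm proceeds as follows. Before reading the stream, for each $i \in [\reps]$ with $\reps = 200k^2 \ln n$, independently sample the set $V_i \subseteq V$ by including each vertex with probability $1/k$; this realizes the vertex sample used by \Cref{alg:cert}. Allocate, for each $i$, an instance $\mathcal{S}_i$ of the dynamic spanning forest sketch from \Cref{prop:spanning-forest} with vertex set $V_i$ and failure probability $\delta = n^{-c}$ for a large enough constant $c$; this uses $O(|V_i| \log^3 n)$ bits per sketch. When an update $(u,v,\Delta)$ arrives in the stream, feed it to $\mathcal{S}_i$ iff both $u, v \in V_i$, so that $\mathcal{S}_i$ is precisely maintaining the induced subgraph $G_i = G[V_i]$. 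At the end of the stream, extract $T_i$ from each $\mathcal{S}_i$ and output $H := \bigcup_i T_i$.

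Correctness is immediate from \Cref{thm:k-con-cert}: conditioned on all $r$ sketches succeeding (which, by a union bound using $\delta = n^{-c}$ for large $c$ and $r = O(k^2 \log n)$, happens with high probability), the returned $H$ is exactly the certificate of \Cref{alg:cert} and is a $k$-vertex-connectivity certificate w.h.p. For the space, \Cref{clm:cert-space-help} yields $\sum_i |V_i| = O(kn \log n)$ with high probability, hence the total sketch space is
\[
\sum_{i=1}^{\reps} O(|V_i| \log^3 n) \;=\; O(\log^3 n) \cdot \sum_{i=1}^{\reps} |V_i| \;=\; O(kn \log^4 n)
\]
bits. If the sum ever exceeds this bound during the run we simply abort; by \Cref{clm:cert-space-help} this contributes only an additive $n^{-\Omega(1)}$ failure probability.

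The main technical nuisance (rather than an obstacle) is managing the random bits. Storing the outcomes of all $\reps \cdot n$ independent sampling coins for the $V_i$'s naively would dominate our budget, but the only concentration inequality we invoked in \Cref{sec:Certificate} is Chernoff on sums over at most $O(k \log n)$ vertices or indices at a time, and these bounds go through with $\Theta(\log n)$-wise independent hash families; one such hash function for each $V_i$ takes $O(\log^2 n)$ bits, so the total overhead is $O(k^2 \log^3 n) = O(kn \log^4 n)$. The randomness inside each spanning forest sketch is already included in the $O(|V_i| \log^3 n)$ bound of \Cref{prop:spanning-forest}. Combining the correctness bound, the space bound, and a final union bound over the at-most-$\mathrm{poly}(n)$ bad events completes the proof of \Cref{thm:one-pass-con-alg}.
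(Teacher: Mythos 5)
Your overall approach is exactly the paper's proof: implement \Cref{alg:cert} in the stream by running an AGM spanning-forest sketch per $G_i = G[V_i]$, charge space via $\sum_i |V_i| = O(kn\log n)$ from \Cref{clm:cert-space-help}, abort if space overflows, and union-bound over the sketch failure probabilities and the failure of \Cref{thm:k-con-cert}. That matches the paper's \Cref{lem:one-pass-space} and the surrounding argument.

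The one place you diverge---the paragraph on managing the randomness for the vertex samples---both overstates the problem and proposes a fix that doesn't quite close. First, you don't need to remember all $\reps \cdot n$ coin flips: it suffices to store, for each $i$, the \emph{list} of sampled vertices in $V_i$, which costs $O\bigl(\sum_i |V_i| \cdot \log n\bigr) = O(kn\log^2 n)$ bits with high probability by \Cref{clm:cert-space-help}. That is dominated by the sketch space, so the naive implementation already fits the budget; this is presumably what the paper means by ``fix the vertex sets $V_i$ before the stream,'' and it requires no limited-independence machinery. Second, and more importantly, the claimed fix via $\Theta(\log n)$-wise independence is not justified by the cited analysis: the proofs in \Cref{sec:Certificate} (e.g., \Cref{clm:I(X),clm:path-prob}) take a union bound over all $\binom{n}{k-1}$ choices of $X$ and therefore rely on per-event failure probabilities of the form $n^{-\Theta(k)}$. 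Limited-independence Chernoff-type bounds give failure probability decaying roughly exponentially in the independence degree, so obtaining $n^{-\Theta(k)}$ would require $\Theta(k\log n)$-wise independence, not $\Theta(\log n)$-wise---and the paper nowhere verifies that its concentration arguments survive under limited independence. You should simply store each $V_i$ explicitly and delete the limited-independence digression; with that change, your proof is the paper's proof.
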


This algorithm is just an implementation of \Cref{alg:cert} in dynamic streams.
We fix the vertex sets $V_i$ in \Cref{alg:cert} before the stream so the only thing we need to 
specify is how we compute the spanning forests during the stream.
We compute a spanning forest $T_i$ of $G_i$ in the stream using the dynamic streaming 
algorithm in \Cref{prop:spanning-forest} with parameters $N= \card{V_i}$ and $\delta=n^{-4}$.
After the stream, we output the certificate $H$.
This completes the description of the streaming algorithm.

We start by bounding the space of this algorithm.

\begin{lemma}\label{lem:one-pass-space}
	This algorithm uses $O(k n \cdot \log^4 n)$ bits of space with high probability.
\end{lemma}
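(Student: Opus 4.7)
The plan is to obtain the space bound by directly invoking \Cref{prop:spanning-forest} for each of the $r = 200 k^2 \ln n$ parallel spanning forest subroutines and then summing the resulting bounds using the high-probability estimate on $\sum_i \card{V_i}$ already established in \Cref{clm:cert-space-help}.

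First, I would fix an iteration $i \in [r]$ and observe that the algorithm runs the dynamic spanning forest routine from \Cref{prop:spanning-forest} on the graph $G_i$ (which has $N := \card{V_i} \leq n$ vertices), with failure parameter $\delta = n^{-4}$. This routine uses
\[
O(\card{V_i} \cdot \log^3(\card{V_i}/\delta)) = O(\card{V_i} \cdot \log^3 n)
\]
bits, since $\log(\card{V_i}/\delta) = O(\log n)$. Summing this bound over all $r$ iterations yields a total of $O\paren{\log^3 n \cdot \sum_{i=1}^{r} \card{V_i}}$ bits for all spanning forest subroutines combined.

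Next, I would apply \Cref{clm:cert-space-help}, which already gives us $\sum_{i=1}^r \card{V_i} = O(kn \cdot \log n)$ with high probability. Plugging this in, the space used by the spanning forest computations is $O(kn \cdot \log^4 n)$ bits with high probability. The only remaining overhead is the storage needed to specify the sampled sets $V_1, \ldots, V_r$; this can be handled either by storing each $V_i$ explicitly (contributing $O(\sum_i \card{V_i} \cdot \log n) = O(kn \log^2 n)$ bits w.h.p., which is absorbed) or by using pairwise-independent hash functions seeded with $O(\log n)$ bits per iteration, so this term does not affect the final bound.

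There is no real obstacle here; the argument is a straightforward composition of \Cref{prop:spanning-forest} with the concentration bound of \Cref{clm:cert-space-help}. The one point that deserves a brief remark is that we should also union bound over the $r = O(k^2 \log n)$ invocations of \Cref{prop:spanning-forest} to certify that all spanning forests are computed correctly: with $\delta = n^{-4}$ per invocation, a union bound gives a total failure probability of at most $r \cdot n^{-4} = O(k^2 \log n / n^4) = o(1)$, which is comfortably within the ``with high probability'' guarantee claimed by the lemma.
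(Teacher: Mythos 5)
Your proof is correct and follows essentially the same argument as the paper: invoke \Cref{prop:spanning-forest} per iteration with $\delta = n^{-4}$, bound the $\log^3$ factor by $O(\log^3 n)$ since $\card{V_i} \leq n$, and sum over the $r$ iterations using \Cref{clm:cert-space-help}. The two additional remarks you make (storage of the $V_i$'s and the union bound over spanning-forest failures) are reasonable and consistent with the paper, though the latter is actually handled in the proof of \Cref{thm:one-pass-con-alg} rather than in this lemma.
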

\begin{proof}
	During the stream, we run a spanning forest algorithm for each graph $G_i$ for $i \in [r]$. 
	The algorithm of~\Cref{prop:spanning-forest} with parameters $N= \card{V_i}$ 
	and $\delta=n^{-4}$ takes at most $c \card{V_i} \cdot \log^3 (n^4 \card{V_i})$ bits of space 
	for	some absolute constant $c$.
	We store $\reps=O(k^2 \ln n)$ spanning forests so the total space taken is
	\[
		\sum_{i=1}^{r} c\card{V_i} \cdot \log^3 (n^4 \card{V_i})
		\leq c \log^3 (n^5) \sum_{i=1}^{r} \card{V_i}
		=  O(kn \log^4 n),
	\]	
	where the first inequality uses $\card{V_i} \leq n$ and the second one uses $ \sum_{i=1}^{r} 
	\card{V_i}=  O(kn \log n)$ (by \Cref{clm:cert-space-help}).
\end{proof}
	
	We are now ready to prove \Cref{thm:one-pass-con-alg}.
	\begin{proof}[Proof of \Cref{thm:one-pass-con-alg}]
		By~\Cref{lem:one-pass-space}, this algorithm uses $O(k n \cdot \log^4 n)$ bits of 
		space with high probability. 
		The high probability guarantee can be even moved from the space bound to the correctness in the following way: if the space of the algorithms at any point increases 
		beyond this high-probability bound guarantee, we simply terminate the algorithm and output ``fail''. 
		This happens with negligible probability by \Cref{lem:one-pass-space}.
		The algorithm is also correct with high probability by \Cref{thm:k-con-cert}. 
		Moreover, none of the spanning forest algorithms of \Cref{prop:spanning-forest} fail with 
		high probability (by a union bound over the failure probabilities of 
		the $r$ spanning forest algorithms).
		Thus, 
		the streaming algorithm (deterministically) uses $O(kn\cdot\log^4 n)$ bits of space and, by union bound, with high probability outputs a certificate 
		of $k$-vertex-connectivity. 
	\end{proof}

% !TeX root = main.tex 
%!TEX root = main.tex

\section{The Lower Bound}\label{sec:LowerBound}

In this section, we extend the prior  single-pass lower bound of~\cite{sun2015tight} for vertex connectivity to multi-pass algorithms. The following theorem
formalizes~\Cref{res:lower}. 
\begin{theorem}\label{thm:LB-multipass}
	For any integer $p \geq 1$, any randomized $p$-pass insertion only streaming algorithm that given an integer 
	$1\leq k\leq n/2$ before the stream and an $n$-vertex (multi-)graph $G=(V,E)$ in the stream, outputs 
	whether $G$ is $k$-vertex connected with probability at least $\nicefrac23$, needs $\Omega(kn/p)$ bits of 
	space.
\end{theorem}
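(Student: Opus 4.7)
The plan is to prove the theorem by reducing from a two-party communication problem and then invoking the standard streaming-to-communication simulation: a $p$-pass algorithm using $s$ bits of space can be simulated by a $2p$-round two-party protocol in which Alice and Bob alternately run the algorithm on their portion of the stream, passing the current $s$-bit state back and forth, for a total of $O(ps)$ bits of communication. Hence any $\Omega(C)$ bound on the communication complexity of deciding $k$-vertex-connectivity (in the edge-partition model where Alice holds some edges and Bob holds the rest) yields an $\Omega(C/p)$ bound on the streaming space.

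To obtain the desired $\Omega(kn)$ communication lower bound, the plan is to reduce from the set-disjointness problem on a universe of size $N = \Theta(kn)$. The classical results of Kalyanasundaram--Schnitger and Razborov establish an $\Omega(N)$ randomized communication lower bound for disjointness that holds \emph{independently of the number of rounds}---it is precisely this round-independence that lets the streaming lower bound scale uniformly as $\Omega(kn/p)$ for every number of passes $p$.

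The core of the proof, and what I expect to be the main obstacle, is the graph gadget realizing this reduction. Given $X,Y \subseteq [N]$, one must construct an $n$-vertex multi-graph $G_{X,Y}$ whose edges are partitioned between Alice (knowing $X$) and Bob (knowing $Y$) so that $G_{X,Y}$ is $k$-vertex-connected if and only if $X \cap Y = \emptyset$. A natural template, echoing the Sun--Woodruff construction, is to take two hub vertices $s, t$ together with $n-2$ middle vertices arranged in $n/k$ layers of width $k$, with the inter-layer bipartite edges indexed by $[N]$ so that each index accounts for one potential edge. When an index appears in both $X$ and $Y$, it produces a parallel edge which---crucially---does \emph{not} increase vertex connectivity, and a Menger-type analysis should then show that the vertex connectivity of $G_{X,Y}$ drops below $k$ precisely when some index is doubled, i.e., when $X \cap Y \neq \emptyset$. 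Encoding $\Theta(kn)$ bits of input into an $n$-vertex graph while enforcing this exact if-and-only-if is the delicate technical step, and it is also the reason the construction naturally lives in multi-graphs with up to two parallel edges per vertex pair, matching the setting in the theorem statement.

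Combining the ingredients, a $p$-pass streaming algorithm using $s$ bits of space solves the $\Theta(kn)$-sized disjointness instance with $O(ps)$ bits of communication, forcing $ps = \Omega(kn)$ and hence $s = \Omega(kn/p)$, as required. As a by-product, setting for example $k = \Theta(n)$ yields an $\Omega(n^2)$ communication lower bound for deciding vertex connectivity in general, ruling out any $\Ot(n)$-communication equivalence with unit-capacity vertex max-flow and thereby answering the question of Blikstad \goodetal in the negative.
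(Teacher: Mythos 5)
Your high-level plan is exactly the paper's: reduce from set disjointness on a universe of size $\Theta(kn)$, build a graph gadget so that $k$-vertex-connectivity of the gadget is equivalent to disjointness, and invoke the round-independent $\Omega(N)$ lower bound for \disj together with the standard $p$-pass simulation (each pass costs at most $2s$ bits of communication, so $ps = \Omega(kn)$). That framing, and the observation that round-independence of the \disj bound is what gives the $1/p$ scaling, are both correct.

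The gap is in the gadget itself, which you yourself flag as ``the delicate technical step'' but then describe in a way that cannot work. You propose that an index $i$ in both $X$ and $Y$ ``produces a parallel edge which does not increase vertex connectivity,'' so that the connectivity drops when $X\cap Y\neq\emptyset$. This has the implication backwards: adding a (parallel) edge can never decrease vertex connectivity, so a shared index producing an \emph{extra} edge cannot cause the cut to shrink. What is needed is the opposite encoding: a shared index must produce a \emph{missing} edge. Concretely, the paper takes $N=k(n-k)$, fixes a bipartite vertex set $L=\{u_1,\dots,u_k\}$, $R=\{v_1,\dots,v_{n-k}\}$, and has Alice include the edge $(u_i,v_j)$ iff $x_{i,j}=0$ and Bob include it iff $y_{i,j}=0$. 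Then $(u_i,v_j)$ is absent from $G$ exactly when $x_{i,j}=y_{i,j}=1$, i.e.\ when $(i,j)\in X\cap Y$. If no index is shared, $G$ contains the complete bipartite graph $K_{k,n-k}$, which is $k$-vertex-connected; if $(i^*,j^*)$ is shared, then deleting $L\setminus\{u_{i^*}\}$ (a set of $k-1$ vertices) isolates $v_{j^*}$, so connectivity is at most $k-1$. Parallel edges arise only when $x_{i,j}=y_{i,j}=0$, which is benign; they are the reason the theorem is stated for multi-graphs with at most two parallel edges per pair. Your layered $s$--$t$ construction with $n/k$ layers of width $k$ is not only unnecessary but would also need its own Menger-style analysis that you do not supply; the simple $K_{k,n-k}$ gadget avoids all of that. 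So the streaming-to-communication and disjointness machinery is right, but as written the reduction does not establish the required iff, and the sign error in the encoding is a genuine flaw, not just a missing detail.
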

%We note that this lower bound holds for every $n$ and $k$ such that $k\leq n/2$.
We use the standard approach of proving lower bounds on the space of streaming algorithms via 
communication complexity (this is further spelled out in the proof of~\Cref{thm:LB-multipass}). 
The communication lower bound itself is proven using a reduction from the well-known set disjointness problem defined as follows.  

\begin{definition}[\textbf{Set Disjointness (\textnormal{\disj}) }] \label{prob:SetDisj}
	For any integer $N \geq 1$, \disj is defined as follows:
	Alice and Bob are given length $N$ binary strings $x \in \set{0,1}^N$ and $y \in \set{0,1}^N$, respectively. 
	They can communicate back and forth and need to output ``No'' if there exists an index $i \in [N]$ such that 
	$x_i=y_i=1$ and ``Yes'' otherwise.
	We assume both players have access to a shared source of randomness.
\end{definition}

We use the following standard lower bound on the communication complexity of this problem.
\begin{proposition}[\!\!\cite{kalyanasundaram1992probabilistic, razborov1990distributional, 
bar2004information}] 
\label{prop:Disj-LB}
	For any integer $N \geq 1$, any two-way randomized protocol for $\disj$ that errs with probability at 
	most $\nicefrac{1}{3}$ needs $\Omega(N)$ bits of communication.
\end{proposition}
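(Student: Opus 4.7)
The plan is to prove the theorem by combining two standard ingredients: (i) a streaming-to-communication simulation that turns a $p$-pass $s$-space algorithm into a $2p$-round two-way protocol using $O(ps)$ total communication, and (ii) an $\Omega(kn)$ two-way communication lower bound for deciding $k$-vertex-connectivity, obtained by a reduction from $\disj$ with parameter $N = \Theta(kn)$.

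\textbf{Step 1 (Streaming to communication).} Given a $p$-pass randomized streaming algorithm $\mathcal{A}$ with $s$ bits of space, I would construct a two-way protocol for the communication version of $k$-vertex-connectivity as follows. The edge set of the graph is split between Alice and Bob, and we view the stream as Alice's edges followed by Bob's. Using public randomness to share the random coins of $\mathcal{A}$, each pass is simulated by Alice running $\mathcal{A}$ on her edges from the current memory state, sending the resulting $s$-bit state to Bob, Bob running on his edges and sending the updated state back. After $p$ such round-trips all $p$ passes are completed and either party can emit the answer. The total communication is $2ps$ bits.

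\textbf{Step 2 (Reduction from $\disj$).} I would reduce $\disj$ with parameter $N = c \cdot kn$, for a small constant $c$, to the two-party communication problem of deciding $k$-vertex-connectivity on an $n$-vertex multi-graph whose edges are split between Alice and Bob. Given $(x, y)$, Alice and Bob jointly construct a multi-graph $G(x,y)$ on $n$ vertices. A fixed base graph $G_0$ is chosen so that its vertex connectivity is exactly $k-1$, with a \emph{unique} minimum vertex cut $S$ of size $k-1$; one concrete realization takes $S$ to be a clique of size $k-1$, two ``side'' vertex sets $L, R$ each completely joined to $S$ (but not to each other), and $\Theta(n)$ ``bridge'' vertices each attached to $S$ with enough additional base edges to guarantee minimum degree at least $k$. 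For each coordinate $\ell \in [N]$, an associated bridge vertex has two potential edges: one, controlled by Alice's bit $x_\ell$, connecting it to the $L$-side, and one, controlled by Bob's bit $y_\ell$, connecting it to the $R$-side. This bridge provides an $L$--$R$ path avoiding $S$ if and only if both edges are present, i.e.\ $x_\ell = y_\ell = 1$. Consequently,
\begin{itemize}
	\item if some coordinate has $x_\ell = y_\ell = 1$ (i.e.\ $\disj(x,y)$ is a ``No'' instance), then the only $(k-1)$-cut of $G_0$ is destroyed and, with the gadgets designed to introduce no new small cut, $G(x,y)$ becomes $k$-vertex-connected;
	\item if no such coordinate exists (i.e.\ $\disj(x,y)$ is a ``Yes'' instance), then $S$ remains a $(k-1)$-vertex-cut of $G(x,y)$, so $G(x,y)$ is not $k$-vertex-connected.
\end{itemize}
This matches the flavor of the single-pass reduction underlying~\cite{sun2015tight}. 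Since Alice's edges can be inserted before Bob's, the resulting stream is insertion-only. Combining with~\Cref{prop:Disj-LB}, the $k$-vertex-connectivity communication problem requires $\Omega(N) = \Omega(kn)$ bits of two-way communication.

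\textbf{Step 3 (Concluding).} Combining Steps 1 and 2, any $p$-pass insertion-only streaming algorithm with $s$ bits of space satisfies $2ps = \Omega(kn)$, giving $s = \Omega(kn/p)$ as claimed.

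The main technical obstacle is engineering the bridge gadgets so that (i) the base graph $G_0$ has a unique minimum vertex cut of size $k-1$; (ii) each gadget contributes an $L$--$R$ bypass exactly when both $x_\ell$ and $y_\ell$ are $1$; and (iii) no combination of partially-fired gadget edges accidentally creates a new cut of size less than $k$ or drops some vertex's degree below $k$. The construction of~\cite{sun2015tight}, used there for the one-pass $\Omega(kn)$ bound, already meets these requirements, and since its correctness argument does not exploit the asymmetry between Alice and Bob, it adapts directly to the two-way communication setting and hence to the multi-pass streaming lower bound.
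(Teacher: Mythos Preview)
You have targeted the wrong statement. \Cref{prop:Disj-LB} is the classical $\Omega(N)$ communication lower bound for Set Disjointness; the paper does not prove it but cites it as a black box. What you have written is a proof attempt for \Cref{thm:LB-multipass}, the multi-pass streaming lower bound for $k$-vertex-connectivity, which \emph{uses} \Cref{prop:Disj-LB}.

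Reading your proposal as an attempt at \Cref{thm:LB-multipass}: your high-level plan (streaming-to-communication simulation, then a reduction from $\disj$ with $N=\Theta(kn)$) coincides with the paper's. The paper's reduction, however, is far simpler than your gadget sketch. It takes a bipartite graph with $\card{L}=k$ and $\card{R}=n-k$, identifies the $N=k(n-k)$ coordinates of $\disj$ with the pairs $(i,j)\in L\times R$, and has Alice insert $(u_i,v_j)$ when $x_{i,j}=0$ and Bob insert it when $y_{i,j}=0$. Disjoint inputs yield (a copy of) $K_{k,n-k}$, which is $k$-vertex-connected; intersecting inputs leave some $(u_{i^*},v_{j^*})$ absent, and deleting $L\setminus\{u_{i^*}\}$ isolates $v_{j^*}$. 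No base graph, no unique-minimum-cut engineering, no bridge gadgets.

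Your gadget, as written, has a genuine counting problem: you posit $\Theta(n)$ bridge vertices but $N=\Theta(kn)$ coordinates, so many coordinates must share a bridge vertex. Once that happens, the bridge acquires an $L$--$R$ bypass as soon as \emph{some} Alice-bit and \emph{some} Bob-bit among its coordinates are both $1$, which is not the disjointness predicate. Deferring to~\cite{sun2015tight} does not repair the sketch you actually gave; and in any case the paper's bipartite construction avoids all of this machinery.
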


We use this result to prove a communication complexity lower bound 
for vertex connectivity. 

\begin{proposition}\label{prop:cc-lower}
	For any integers $n,k \geq 1$ such that $1\leq k\leq n/2$ the following is true. Any randomized communication protocol wherein Alice and Bob receive edges of an $n$-vertex
	(multi-)graph $G=(V,E)$ partitioned between the two, and can output whether or not $G$ is $k$-vertex-connected with probability at least $\nicefrac23$ requires $\Omega(kn)$ bits of communication. 
\end{proposition}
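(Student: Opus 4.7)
The plan is to reduce from the two-way set disjointness problem \disj on an instance of size $N = k \cdot (n-k) = \Omega(kn)$ (using $k \leq n/2$, so $n - k \geq n/2$). The strategy is a now-standard ``degree-cut'' construction: Alice's and Bob's inputs will jointly specify edges of an $n$-vertex multi-graph $G$ such that $G$ is $k$-vertex-connected if and only if the \disj instance is a \textsf{Yes}-instance. Once that equivalence is established, any protocol for $k$-vertex-connectivity succeeds on the DISJ task with the same probability, and the $\Omega(N) = \Omega(kn)$ bound follows from \Cref{prop:Disj-LB}.

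For the construction, index the DISJ coordinates by $(i,j) \in [k] \times [n-k]$. Partition the $n$ vertices into a \emph{core} $C = \{c_1,\ldots,c_k\}$ and a set of \emph{leaves} $L = \{v_1,\ldots,v_{n-k}\}$. Both players (or just one, it does not matter) add the edges of a clique on $C$; this requires no communication. Alice then adds the edge $(c_i, v_j)$ exactly when $x_{i,j} = 0$, and Bob symmetrically adds $(c_i, v_j)$ exactly when $y_{i,j} = 0$. Thus the edge $(c_i, v_j)$ is present in the resulting multi-graph (with multiplicity $1$ or $2$) if and only if $\neg(x_{i,j} \wedge y_{i,j})$, i.e., the pair is not a witness of intersection.

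The core of the proof is the case analysis of $k$-vertex-connectivity of $G$. In the \textsf{Yes} case (disjoint inputs) every edge $(c_i, v_j)$ is present, so each leaf has $C$ as its full neighborhood and each core vertex is adjacent to every other vertex of $G$. For any vertex subset $X$ with $|X| \leq k-1$, at least one core vertex $c \in C \setminus X$ survives; every surviving leaf is adjacent to $c$, and every surviving core vertex is adjacent to $c$ as well. Hence $G - X$ is connected and therefore $G$ is $k$-vertex-connected. In the \textsf{No} case, pick any $(i^*,j^*)$ with $x_{i^*,j^*} = y_{i^*,j^*} = 1$: the edge $(c_{i^*}, v_{j^*})$ is absent, so the leaf $v_{j^*}$ has at most $k-1$ neighbors (all within $C$), and removing its neighborhood isolates $v_{j^*}$ from the rest (which is non-empty because $n \geq k+2$, guaranteed by $k \leq n/2$ and $n$ large). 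This exhibits a vertex cut of size $< k$.

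With the equivalence in hand, any randomized two-way communication protocol deciding $k$-vertex-connectivity with success probability $\geq 2/3$ decides $\disj_N$ with the same probability using the same number of bits, so by \Cref{prop:Disj-LB} it needs $\Omega(N) = \Omega(kn)$ bits of communication. The main obstacle is the \textsf{Yes}-case verification: one has to check that no subset of $k-1$ vertices, of either type, can disconnect $G$ — handled by the observation that a surviving core vertex acts as a universal hub. The rest (counting $N = k(n-k) = \Omega(kn)$ bits and reducing the protocol to DISJ) is essentially bookkeeping.
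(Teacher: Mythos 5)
Your proof is correct and takes essentially the same approach as the paper: reduce from \disj with $N = k(n-k)$, identify coordinates with potential edges between a $k$-side and an $(n-k)$-side, and encode ``no intersection'' as ``edge present.'' The only cosmetic difference is that you add a clique on the $k$ core vertices to make the \textsf{Yes}-case hub argument immediate; the paper omits this clique and argues connectivity directly from the complete bipartite structure (which suffices, since $k \leq n/2$ guarantees at least one survivor on each side after deleting $k-1$ vertices), but both variants are valid.
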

\begin{proof}
	We start with a high level sketch of the proof. 
	We use a reduction from the \disj communication problem for $N=\Theta(kn)$. 
Alice and Bob construct a bipartite graph $G$ on $n$ fixed vertices and pick their edges based on the values in their input strings $x$ and $y$ in $\disj$.
$G$ will be constructed in a way that if $x$ and $y$ are disjoint, then $G$ will contain a complete bipartite graph and has vertex connectivity 
$k$; otherwise, at least one edge is missing and the graph has vertex connectivity strictly less than $k$.
Thus, solving $k$-vertex connectivity also solves \disj implying the space lower bound. We now formalize this idea. 

To prove the bound for parameters $n$ and $k$, we start with an instance of \disj such that $N= k \cdot (n-k)$.
Alice and Bob construct an $n$-vertex bipartite graph $G=(L\sqcup R,E)$ with $k$ vertices 
on $L$ and $n-k$ vertices on $R$ as follows: 
\begin{itemize}
\item \emph{Vertices:} the vertices in $L$ are $u_1,u_2,\ldots,u_k$ and the vertices in $R$ are 
$v_1,v_2,\ldots,v_{n-k}$. 
\item \emph{Edges:} the indices of Alice's string $x$ and Bob's string $y$ can be expressed using coordinates $i\in[k]$ 
and $j \in [n-k]$ (since $N=k \cdot (n-k)$).
If $x_{i,j}=0$ then Alice has an edge $(u_i,v_j)$ and if $y_{i,j}=0$ then Bob has an edge 
$(u_i,v_j)$ (this way, there can be up to two edges between any pairs of vertices). 
\end{itemize}

The following claim is the key part for establishing the correctness of our reduction.
\begin{claim}\label{clm:LB-reduction}
	$G$ is $k$-vertex connected iff $x$ and $y$ are disjoint.
\end{claim}
\begin{proof}
	If $x$ and $y$ are disjoint, then for every $i \in [k],j \in [n-k]$ either $x_{i,j}=0$ or $y_{i,j}=0$ and thus edge 
	$(u_i,v_j)$ exists in $G$.
	Thus, $G$ contains a complete bipartite graph.
	Deleting any set $X$ of $k-1$ vertices leave at least one vertex $u_i \in L$ and one vertex $v_j \in R$. Since 
	$u_i$ and $v_j$ are connected, and any vertex in $L$ is connected to $v_j$ and any vertex in $R$ 
	is connected to $u_i$, we have that $G-X$ is connected.
	Therefore, $G$ is $k$-vertex-connected.

	If $x$ and $y$ are not disjoint, then there are indices $i^*$ and $j^*$ such that $x_{i^*,j^*}=1$ 
	and $y_{i^*,j^*}=1$ implying that edge $(u_{i^*},v_{j^*})$ does not exist in $G$.
	Deleting all vertices in $L$ except $u_{i^*}$ disconnects $v_{j^*}$ from the rest of the graph.
	Thus, there is a vertex cut of size $k-1$  implying that $G$ is not $k$-vertex connected. $\qed_{\,\,\textnormal{\Cref{clm:LB-reduction}}}$
	
\end{proof}

The proof of~\Cref{prop:cc-lower} now follows from~\Cref{prop:Disj-LB}: Alice and Bob, given any instance $(x,y)$ of $\disj$, can construct the graph $G$ in the  reduction without any communication 
and run the protocol for $k$-vertex-connectivity on $G$. If the protocol returns $G$ is $k$-vertex-connected, they return ``Yes'' and otherwise they return ``No''. 
The correctness follows from~\Cref{clm:LB-reduction}. This implies that the $k$-vertex-connectivity protocol  needs $\Omega(N) = \Omega(kn)$ communication by~\Cref{prop:Disj-LB}, concluding the proof. 
\end{proof}

We can now obtain~\Cref{thm:LB-multipass} as a standard corollary of~\Cref{prop:cc-lower}. 

\begin{proof}[Proof of~\Cref{thm:LB-multipass}]
Given a $p$-pass streaming algorithm for the $k$-vertex-connectivity problem, Alice and Bob can use the algorithm to solve the communication 
problem as follows. Alice treats her edges in the communication problem as the first part of the stream and Bob treats his edges as the second part of the stream. 
The players run the streaming algorithm on this stream by communicating the memory content whenever they finish running that pass of the algorithm on their input. 
This requires sending the memory content for $2p-1$ times until Bob can compute the answer of the streaming algorithm. 

Assuming we start with a $p$-pass streaming algorithm that uses only $o(k n/p)$ bits of space, the above approach gives us a 
communication protocol with $o(kn)$ communication for $k$-vertex-connectivity, with the same probability of success as the streaming algorithm. 
This contradicts~\Cref{prop:cc-lower}, and concludes the proof of \Cref{thm:LB-multipass}.
\end{proof}

\subsection*{Acknowledgements}
We are grateful to Zachary Langley and Michael Saks for helpful conversations about certificates of vertex connectivity and to the anonymous reviewers of SOSA 2023 for their useful comments on the presentation of the paper. 

\bibliographystyle{alpha}
\bibliography{new}

\bigskip

\appendix

% !TeX root = main.tex 
%!TEX root = main.tex

\section{An Insertion-Only Streaming Algorithm}\label{app:insertion-only}
We describe the insertion-only streaming algorithm for $k$-vertex-connectivity here. This algorithm has been folklore in the literature already since the introduction of graph streaming model in~\cite{FeigenbaumKMSZ05} (but we are not aware of any explicit reference for this result). 
Several references in the past attribute the algorithm to 
\cite{eppstein1997sparsification} but in fact, it appears that
\cite{cheriyan1993scan} also has almost the complete proof which they present as an online 
algorithm that outputs a certificate of vertex-connectivity (as in~\Cref{def:certificate})\footnote{To the best of our knowledge, the first version of~\cite{cheriyan1993scan} is a technical report in 1991 which predates the conference version of~\cite{eppstein1997sparsification} from 1992.}.
We provide the algorithm and its analysis in the insertion-only streaming model in this appendix for completeness. 

\paragraph{Preliminaries.} Before we present the algorithm we mention two important propositions which will be useful in the 
analysis of the algorithm.
The first is Menger's theorem which gives an equivalent definition of $k$-vertex-connectivity via vertex-disjoint paths. 
\begin{proposition}[Menger's Theorem; c.f.~{\cite[Theorem 17]{west2001introduction} 
}]\label{prop:menger}
	Let $G$ be an undirected graph and $s$ and $t$ be two non-adjacent vertices. 
	Then the size of the minimum vertex cut for $s$ and $t$ is equal to the maximum 
	number of vertex-disjoint paths between $s$ and $t$. \\
	Moreover, a graph is $k$-vertex-connected if and only if every pair of vertices has at 
	least $k$ vertex-disjoint paths in between.
\end{proposition}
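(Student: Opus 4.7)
The plan is to reduce Menger's theorem to the max-flow min-cut theorem via the standard \emph{vertex-splitting} gadget. The easy direction is immediate: if $P_1,\ldots,P_p$ are pairwise vertex-disjoint $s$-$t$ paths, then since $s$ and $t$ are excluded from any $s$-$t$ vertex cut, any such cut $X$ must contain an internal vertex of each $P_j$; as the interiors are pairwise disjoint, $|X|\geq p$. Thus the minimum cut size is at least the maximum number of vertex-disjoint paths.

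For the reverse direction I would construct a directed, capacitated auxiliary network $G'$ as follows. For every vertex $v\in V\setminus\{s,t\}$ introduce two copies $v_{\text{in}}$ and $v_{\text{out}}$ joined by a directed arc $v_{\text{in}}\to v_{\text{out}}$ of capacity $1$. For every undirected edge $\{u,v\}$ of $G$ add arcs $u_{\text{out}}\to v_{\text{in}}$ and $v_{\text{out}}\to u_{\text{in}}$ of capacity $+\infty$, identifying $s$ and $t$ with single nodes (treated as source and sink respectively). Applying the max-flow min-cut theorem to $G'$ gives an integral maximum $s$-$t$ flow of value $f^*$ equal to the capacity of a minimum $s$-$t$ cut $C^*$. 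By flow decomposition, the flow splits into $f^*$ unit-capacity paths in $G'$; the unit capacity on each $v_{\text{in}}\to v_{\text{out}}$ arc forces these paths to use every internal vertex at most once, so they project to $f^*$ vertex-disjoint $s$-$t$ paths in $G$. On the cut side, $C^*$ cannot contain any $\infty$-capacity arc, hence consists solely of vertex-arcs $v_{\text{in}}\to v_{\text{out}}$, and the corresponding set of vertices in $G$ is an $s$-$t$ vertex cut of size $|C^*|=f^*$. Combined with the easy direction, this yields equality.

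For the global statement, $G$ is $k$-vertex-connected iff every $s$-$t$ vertex cut has size at least $k$. If every pair of vertices admits $k$ vertex-disjoint paths, then in particular every non-adjacent pair does, and the first part forces every vertex cut to have size $\geq k$. Conversely, if $G$ is $k$-vertex-connected, the first part immediately produces $k$ vertex-disjoint paths between any non-adjacent pair; for an adjacent pair $\{s,t\}$, I would apply the first part to the graph $G-st$ (which is $(k-1)$-connected between $s$ and $t$, since deleting the edge reduces the $s$-$t$ connectivity by at most one) to obtain $k-1$ internally disjoint $s$-$t$ paths, and then adjoin the edge $st$ itself as the $k$-th path.

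The main subtlety I anticipate is not the max-flow min-cut invocation itself but the careful accounting at the endpoints: correctly handling arcs incident to $s$ and $t$ in the splitting gadget, and separately treating the adjacent-pair case in the second part. Neither is conceptually difficult, but both must be stated explicitly to make the vertex-to-arc correspondence (and hence the equality of minima) literally tight.
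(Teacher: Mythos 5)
The paper does not actually prove this proposition --- it is stated as a known result and cited to West's textbook (\cite[Theorem 17]{west2001introduction}), so there is no ``paper's approach'' to compare against. Your proposal is a self-contained proof, and the route you take --- reducing to max-flow/min-cut through the directed vertex-splitting gadget --- is one of the standard correct proofs of Menger's theorem. The easy direction and the flow/cut correspondence via the unit-capacity arcs $v_{\mathrm{in}}\to v_{\mathrm{out}}$ are both accurate; since $v_{\mathrm{in}}$ has a single out-arc and $v_{\mathrm{out}}$ a single in-arc, unit-flow paths in $G'$ do project to internally vertex-disjoint simple $s$-$t$ paths in $G$, and a finite min cut must consist only of vertex-arcs and hence corresponds to an $s$-$t$ vertex cut.

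The one place your sketch is genuinely thin is the adjacent-pair case in the global statement. The parenthetical ``$G-st$ is $(k-1)$-connected between $s$ and $t$, since deleting the edge reduces the $s$-$t$ connectivity by at most one'' reads as if it were immediate, but it is exactly the classical Whitney-style case analysis that needs to be written out: if $X$ with $|X|\leq k-2$ were an $s$-$t$ separator in $G-st$, one shows that either $X$ is already a separator of $G$, or $X\cup\{s\}$ (or $X\cup\{t\}$) is a separator of $G$ of size at most $k-1$, or else $|V|\leq k$; each case contradicts $k$-vertex-connectivity. You do flag that this needs explicit treatment, so this is a matter of filling in rather than a wrong turn, but as written the inference is not yet justified --- the ``$s$-$t$ connectivity of $G$'' for adjacent $s,t$ is precisely the quantity the second part is trying to establish, so the argument must instead go through the global connectivity of $G-st$ (or through the case analysis directly). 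With that step fleshed out, the proof is complete and correct.
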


The next  is Mader's theorem on existence of $k$-vertex-connected subgraphs on sufficiently dense graphs. 
\begin{restatable}[Mader's Theorem; c.f.~{\cite[Theorem 1.4.3]{diestel2005graph}} 
]{proposition}{Mader}\label{prop:mader-thm}
	For any $k> 1$, if an undirected graph has at least $2k-1$ vertices and at least 
	$(2k-3)(n-k+1)+1$ edges,  it contains a $k$-vertex-connected subgraph.
\end{restatable}
Unlike Menger's theorem, there are not many sources that contain a complete proof of Mader's theorem in the above formulation and with the given parameters (despite being a well-known result mentioned in various sources, e.g., with a 
different formulation in~{\cite[Theorem 1.4.3]{diestel2005graph}}). Thus, we also present a simple proof of this theorem in~\Cref{app:Mader} for interested readers.

\paragraph{The insertion-only streaming algorithm.} We will reprove the following folklore theorem. 
\begin{theorem}[cf.~\cite{cheriyan1993scan,eppstein1997sparsification,
		sun2015tight,guha2015vertex}]\label{thm:ins-only}
	There is a deterministic insertion-only streaming algorithm that given an integer $k\geq 1$ 
	before the stream and a graph $G=(V,E)$ in the stream, outputs a certificate $H$ of 
	$k$-vertex-connectivity of $G$ using $O(kn \log n)$ bits of space.
\end{theorem}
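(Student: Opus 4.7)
The plan is to maintain a spanning subgraph $H$ of the stream so far as our certificate. Initialize $H = (V, \emptyset)$, and for each arriving edge $e = (u, v)$ test (by an offline max-flow computation on the current $H$ together with the newly arrived $e$) whether $H$ already contains $k$ internally vertex-disjoint $u$-$v$ paths; if yes, discard $e$, otherwise insert $e$ into $H$. At the end of the stream, output $H$. Only $H$ itself needs to be kept in memory, so the space will be dominated by $|E(H)| \cdot O(\log n)$ bits.

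For correctness I must show $H$ is $k$-vertex-connected iff $G$ is. Since $H \subseteq G$ on the same vertex set, $k$-vertex-connectivity of $H$ immediately implies $k$-vertex-connectivity of $G$. Conversely, suppose $G$ is $k$-vertex-connected but, towards contradiction, $H$ is not. Then some set $X$ with $|X| \leq k-1$ partitions $V \setminus X$ into nonempty sides $S, T$ with no $H$-edges between them. Since $X$ is not a cut of $G$, there is an edge $e=(s,t) \in G$ crossing $(S,T)$; when $e$ was processed it was not added, so $H$ at that time already contained $k$ vertex-disjoint $s$-$t$ paths. Those paths persist in the final $H$ (since we only grow it), and cannot all be severed by a cut of size $<k$, contradicting that $X$ separates $s$ from $t$ in $H$.

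The main work is the space bound; I plan to show $|E(H)| \leq (2k-1)(n-k)+1 = O(kn)$ holds \emph{throughout} the stream using Mader's theorem (\Cref{prop:mader-thm}) applied with parameter $k+1$ rather than $k$. The boundary case $k = 1$ is handled trivially since in that case $H$ is a spanning forest of size at most $n-1$. For $k \geq 2$, suppose to the contrary that at some moment the bound fails. Then Mader's theorem yields a $(k+1)$-vertex-connected subgraph $C \subseteq H$. Let $e = (u,v)$ be the edge of $C$ that was added to $H$ last, and let $H_{\text{prev}}$ denote the state of $H$ immediately before $e$ was processed; then $C \setminus \{e\} \subseteq H_{\text{prev}}$. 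Since $C$ is $(k+1)$-vertex-connected, deleting a single edge can drop vertex-connectivity by at most one, so $C \setminus \{e\}$ remains $k$-vertex-connected. By Menger's theorem (\Cref{prop:menger}), $H_{\text{prev}}$ already contained $k$ vertex-disjoint $u$-$v$ paths, so the algorithm would have discarded $e$, contradicting $e \in H$.

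The main subtlety, and the step I expect to be the key trick, is invoking Mader with $k+1$ rather than $k$: with parameter $k$ alone one would only get that $C \setminus \{e\}$ is $(k-1)$-connected between $u$ and $v$, which is too weak to force a contradiction with the insertion rule. The final space bound $O(kn \log n)$ bits follows because $|E(H)| = O(kn)$ at the end and each edge takes $O(\log n)$ bits to store, completing the proof.
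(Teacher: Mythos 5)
Your proposal takes essentially the same approach as the paper's: the identical greedy filtering algorithm, the identical correctness argument via a discarded crossing edge, and the same space bound derived by applying Mader's theorem at parameter $k+1$ and examining the last-added edge of the resulting $(k+1)$-vertex-connected subgraph. One caveat: your intermediate bound $\card{E(H)}\leq(2k-1)(n-k)+1$ is false when $n\leq 2k$ (Mader's vertex-count hypothesis is not met, and $H$ can be a clique; e.g.\ for $n=k$ the algorithm stores all $\binom{k}{2}$ edges), so you should dispose of the small-$n$ regime separately by noting $\binom{n}{2}=O(kn)$ there, as the paper does in \Cref{lem:ins-only-space}. A minor stylistic difference: instead of invoking the fact that removing one edge lowers vertex connectivity by at most one, the paper's \Cref{clm:no-high-con-subgraph} applies Menger's theorem directly to the $(k+1)$-connected subgraph to get $k+1$ vertex-disjoint $u$--$v$ paths, of which at least $k$ survive deleting $e$; both are correct, but the paper's route avoids the boundary cases one must check to justify the edge-deletion lemma.
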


The algorithm is very simple: when an edge $(u,v) $ arrives in the stream, store the edge if and only if the number of vertex-disjoint paths 
between $u$ and $v$ is less than $k$.

\begin{Algorithm}\label{alg:ins-only-k-con}
	An insertion-only streaming algorithm for $k$-vertex connectivity.
	
	\medskip
	
	\textbf{Input:} A graph $G = (V,E)$ specified in a stream and an integer $k$ specified at 
	the beginning of the stream.
	
	\medskip
	
	\textbf{Output:} A certificate $H$ for $k$-vertex connectivity of $G$.
	
	\smallskip
	\begin{enumerate}
		\item Let $F=\emptyset$. When any edge $e=(u,v)$ arrives, if the maximum number of vertex-disjoint 
		paths between $u$ and $v$ in $(V, F)$ is less than $k$, update $F \leftarrow F \cup \set{e}$ (otherwise, $F$ remains unchanged).
		\item When the stream ends output $H:= (V,F)$ as a certificate for $k$-vertex connectivity of 
		$G$.
	\end{enumerate}
\end{Algorithm}

We start by proving the correctness of the algorithm.
\begin{lemma}\label{lem:ins-only-corr}
	Subgraph $H$ output by~\Cref{alg:ins-only-k-con} is a certificate for $k$-vertex-connectivity of $G$.
\end{lemma}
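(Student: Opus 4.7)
The plan is to prove both implications in the certificate definition, with one direction being immediate.

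First, I would observe that $H$ is a spanning subgraph of $G$ (we never delete vertices and only choose which edges to add). Hence if $H$ is $k$-vertex-connected, so is $G$, which handles one direction. The non-trivial direction is showing that if $G$ is $k$-vertex-connected then so is $H$.

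For this direction, I would argue by contradiction. Suppose $G$ is $k$-vertex-connected but $H$ is not. Then there exists a set $X \subseteq V$ with $\card{X} \leq k-1$ and a partition $(S, X, T)$ of $V$ with $S, T$ both non-empty and no edges of $H$ between $S$ and $T$. Because $G$ is $k$-vertex-connected and $\card{X} < k$, the graph $G - X$ is connected; in particular, it contains some edge $e = (u,v)$ with $u \in S$ and $v \in T$. This edge $e$ cannot belong to $H$ (since $H$ has no $S$-$T$ edges), so the algorithm must have rejected $e$ upon its arrival. By the rejection rule, at that moment the current edge set $F$ already admitted at least $k$ vertex-disjoint $u$-$v$ paths, and since edges are only added to $F$ over time, $H$ also contains at least $k$ vertex-disjoint $u$-$v$ paths.

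To close the contradiction, I would pigeonhole: of these $k$ vertex-disjoint $u$-$v$ paths in $H$, at most $\card{X} \leq k-1$ can contain a vertex of $X$, so at least one path $Q$ lies in $H - X$. But $Q$ connects $u \in S$ to $v \in T$ in $H - X$, and every edge of $Q$ has both endpoints in $S \cup T$; since $H$ has no edges between $S$ and $T$, the path $Q$ would have to lie entirely in $S$ or entirely in $T$, contradicting $u \in S$ and $v \in T$. This contradiction forces $H$ to be $k$-vertex-connected, completing the proof.

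There is really no serious obstacle here: the only subtlety is to be careful that $F \subseteq H$ at all times (so ``$k$ vertex-disjoint paths in $F$ at some earlier time'' implies ``$k$ vertex-disjoint paths in $H$ now''), and to correctly apply the pigeonhole on the $k$ paths versus the $(k-1)$-sized cut. Menger's theorem (\Cref{prop:menger}) is implicitly invoked only in the very mild form that a graph fails $k$-vertex-connectivity iff it admits a vertex cut of size less than $k$, which is the standard definition used throughout the paper.
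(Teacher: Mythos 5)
Your proof is correct and takes essentially the same approach as the paper: one direction is immediate from $H$ being a spanning subgraph, and the other direction argues by contradiction that an unstored $S$-$T$ edge $e=(u,v)$ forces $k$ vertex-disjoint $u$-$v$ paths in $H$, which cannot all be cut by the $(k-1)$-element set $X$. You are somewhat more explicit than the paper about two small but genuine points---that $F$ grows monotonically (so paths present at rejection time persist in $H$) and the pigeonhole step that produces a path in $H-X$---but these are steps the paper leaves implicit rather than differences in strategy.
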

\begin{proof}	
	We need to show that $H$ is $k$-vertex-connected iff $G$ is $k$-vertex-connected.
	If $H$ is $k$-vertex-connected then $G$ is $k$-vertex-connected since $H$ is a 
	subgraph of $G$ on the same set of vertices. 
	
	Suppose now towards a contradiction that $G$ is $k$-vertex connected, but $H$ is not.
	This means that there is a vertex cut $X$ of size at most $k-1$ such that $S,X,T$ is a partition of 
	$V$ and there are no edges between $S$ and $T$.
	Since $G$ is $k$-vertex-connected, it has an edge $e=(u,v)$ between $S$ and $T$.
	Edge $e$ is not stored in $H$ by \Cref{alg:ins-only-k-con} and so $u$ and $v$ have at least $k$ 
	vertex disjoint paths between them.
	But this means that deleting $X$, a set of at most $k-1$ vertices, cannot disconnect $S$ and $T$, leading to a contradiction.
\end{proof}

We now prove the following claim which will be helpful in proving the space bound.
\begin{claim}\label{clm:no-high-con-subgraph}
	The certificate $H$ of~\Cref{alg:ins-only-k-con} does not contain any subgraph that is $(k + 1)$-vertex connected.
\end{claim}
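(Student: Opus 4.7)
My plan is to prove the contrapositive-style argument by contradiction: assume $H$ contains a subgraph $H'$ that is $(k+1)$-vertex-connected, and derive a contradiction with the acceptance rule of \Cref{alg:ins-only-k-con}. The key observation is that the algorithm only adds an edge $(u,v)$ when $u$ and $v$ have strictly fewer than $k$ vertex-disjoint paths between them in the currently stored graph, so a highly connected subgraph cannot come into existence edge-by-edge.

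The execution plan is as follows. Let $H'$ be the hypothetical $(k+1)$-vertex-connected subgraph of $H$. First, I identify a critical edge: let $e = (u,v)$ be the last edge of $H'$ that was inserted into $F$ during the stream (well-defined since the stream processes a finite sequence of insertions and every edge of $H'$ is in $F$ by assumption). Second, I observe that at the moment just before $e$ was added, the stored graph $F$ already contained every edge of $H' \setminus \{e\}$, because those edges were processed earlier and, in an insertion-only stream with \Cref{alg:ins-only-k-con}, once an edge is stored it is never removed.

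Third, I apply Menger's theorem (\Cref{prop:menger}) to $H'$: since $H'$ is $(k+1)$-vertex-connected, there exist at least $k+1$ pairwise vertex-disjoint $u$-$v$ paths in $H'$. Because any two such paths share only the endpoints $u$ and $v$, at most one of them can be the single-edge path consisting of $e$ alone. Hence at least $k$ of these vertex-disjoint $u$-$v$ paths avoid the edge $e$ entirely and therefore lie inside $H' - e \subseteq F$ at the moment $e$ arrived in the stream. Fourth, this means $(V, F)$ already contained at least $k$ vertex-disjoint $u$-$v$ paths just before $e$ was processed, which directly contradicts the condition in \Cref{alg:ins-only-k-con} that caused $e$ to be stored (namely, that the number of vertex-disjoint $u$-$v$ paths in $(V,F)$ was strictly less than $k$).

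The main conceptual step, and the only part that requires care, is the ``last edge'' reduction in the second step: one must ensure that the connectivity of $H' - e$ in $F$ is measured \emph{at the time $e$ arrives}, not at the end of the stream, which is precisely why we pick $e$ to be the \emph{last} edge of $H'$ inserted. The rest is a direct invocation of Menger's theorem plus the bookkeeping that vertex-disjoint paths in $H'$ can use at most one copy of the edge $e$. No further computation is needed, so the claim follows.
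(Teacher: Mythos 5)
Your proof is correct and follows essentially the same argument as the paper's: pick the last edge $e$ of the hypothetical $(k+1)$-vertex-connected subgraph that was stored, apply Menger's theorem to get $k+1$ vertex-disjoint $u$-$v$ paths of which at least $k$ avoid $e$ and hence lie in the stored set $F$ when $e$ arrives, contradicting the algorithm's acceptance rule. Your writeup is a bit more careful than the paper's in spelling out that $F$ already contains $H' \setminus \{e\}$ at the decisive moment and that at most one of the disjoint paths can be the single-edge path $e$, but this is exactly the same proof.
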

\begin{proof}
	Assume for contradiction that $H$ contains a subgraph $J$ that is $(k + 1)$-vertex connected.
	Let $e=(u,v)$ be the last edge added to $J$ by \Cref{alg:ins-only-k-con}.
	By~\Cref{prop:menger}, this means $u$ and $v$ have at least $k+1$ vertex-disjoint paths 
	between them in $J$ 
	and thus have at least $k$ vertex disjoint paths between them in $J-\set{e}$.
	Therefore, when $e$ arrives in the stream, it is not stored since $u$ and $v$ already have $k$ 
	vertex disjoint paths between them in $H-\set{e}$. But this is a contradiction with $e$ being in $H$. 
\end{proof}

Finally, we prove that $H$ contains at most $2kn$ edges.
\begin{lemma}\label{lem:ins-only-space}
	The certificate $H$ of~\Cref{alg:ins-only-k-con} contains at most $2kn$ edges.
\end{lemma}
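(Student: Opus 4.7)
The plan is to combine Claim \ref{clm:no-high-con-subgraph}, which tells us that $H$ has no $(k+1)$-vertex-connected subgraph, with the contrapositive of Mader's theorem (Proposition \ref{prop:mader-thm}) applied with parameter $k+1$. Mader's theorem says that any graph with $n$ vertices, $n \geq 2(k+1)-1$, and more than $(2(k+1)-3)(n-(k+1)+1) = (2k-1)(n-k)$ edges must contain a $(k+1)$-vertex-connected subgraph. Since $H$ contains no such subgraph, we will conclude $|E(H)| \leq (2k-1)(n-k)$, which after a short arithmetic check is at most $2kn$.

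First I would handle the degenerate regime $n \leq 2k$ separately, since Mader's theorem requires at least $2(k+1)-1 = 2k+1$ vertices. In this regime, even a multigraph with up to two parallel edges per pair (the model used throughout the paper) has at most $2\binom{n}{2} = n(n-1) \leq 2kn$ edges trivially, so the bound holds irrespective of the algorithm. The case $k = 1$ is also essentially trivial: in that case Algorithm \ref{alg:ins-only-k-con} only adds an edge $(u,v)$ when $u$ and $v$ lie in different connected components, so $H$ is a forest with at most $n-1 \leq 2kn$ edges (and indeed Mader's theorem with $k+1 = 2$ still yields this bound for $n \geq 3$).

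For the main case $k \geq 1$ and $n \geq 2k+1$, I would apply Mader's theorem to $H$ itself with $k$ replaced by $k+1$. If $|E(H)| \geq (2k-1)(n-k) + 1$, then the theorem would produce a $(k+1)$-vertex-connected subgraph of $H$, directly contradicting Claim \ref{clm:no-high-con-subgraph}. Hence $|E(H)| \leq (2k-1)(n-k)$. A short expansion gives
\[
(2k-1)(n-k) = 2kn - 2k^2 - n + k \leq 2kn,
\]
since $-2k^2 - n + k \leq 0$ for all $k \geq 1$ and $n \geq 0$. This proves $|E(H)| \leq 2kn$.

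There is no real obstacle here: Claim \ref{clm:no-high-con-subgraph} is the substantive content, and Mader's theorem converts the ``no $(k+1)$-connected subgraph'' structural property directly into an edge-count bound. The only things to be careful about are verifying the boundary conditions on $n$ (handling $n \leq 2k$ separately) and checking the final numerical inequality $(2k-1)(n-k) \leq 2kn$, both of which are immediate.
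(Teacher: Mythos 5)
Your proof takes essentially the same route as the paper: handle small $n$ by a trivial counting bound, then apply Mader's theorem with parameter $k+1$ to $H$ and invoke Claim~\ref{clm:no-high-con-subgraph} to rule out a $(k+1)$-vertex-connected subgraph, forcing $|E(H)| \leq (2k-1)(n-k) \leq 2kn$. The only differences are cosmetic (you spell out the arithmetic $(2k-1)(n-k)\leq 2kn$ explicitly, use a slightly different cutoff for the small-$n$ case, and treat $k=1$ separately even though the general argument already covers it); the substance is identical.
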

\begin{proof}
	If $n < 2k-1$ then $H$ contains at most $n (n-1)/2 \leq 2kn$ edges proving the claim.
	Thus, consider $n\geq 2k-1$.
	If $H$ has more than $2kn$ edges then by \Cref{prop:mader-thm} it contains a $(k + 1)$-vertex 
	connected subgraph. 
	But $H$ cannot contain any subgraph that is $(k + 1)$-vertex connected 
	by \Cref{clm:no-high-con-subgraph}.
\end{proof}

We can now conclude the proof of \Cref{thm:ins-only}.
\begin{proof}[Proof of \Cref{thm:ins-only}]
	\Cref{lem:ins-only-corr} proves that $H$ is a certificate for $k$-vertex-connectivity of $G$.
	\Cref{lem:ins-only-space} proves that $H$ contains at most $2kn$ edges implying that 
	\Cref{alg:ins-only-k-con} uses $O(k n \log n)$ bits of space.
\end{proof}

\section{Mader's Theorem}\label{app:Mader}

We present a self-contained proof of Mader's theorem in this section for the interested reader. 
Consider the following restatement of the proposition.
\Mader*

\begin{proof}
	We fix a value of $k$ and prove the proposition by induction on $n$, the number of vertices. Our induction hypothesis is as follows: For any $t\geq 2k$, if an undirected graph 
	has $t-1$ vertices and at least $(2k-3)(t-k)+1$ edges then it contains a $k$-vertex connected subgraph.

	\noindent\textbf{Base case:} when $t=2k$. 
	
	We have $m\geq (2k-3)(k)+1 = 2k^2-3k+1$.
	A clique on $2k-1$ vertices has $(2k-1)(2k-2)/2 = 2k^2-3k+1$ edges.
	Thus, the only graph on $2k-1$ vertices that satisfies the edge lower bound is a clique that is 
	$k$-vertex-connected and thus has subsets that are $k$-vertex-connected.
	
	\noindent\textbf{Induction step:} We assume the hypothesis for  integers up to $t$ and prove it for $t+1$, that is if 
	an undirected graph has $t$ vertices and at least $(2k-3)(t-k+1)+1$ edges then it contains a $k$-vertex-connected subgraph. 
	
	Assume towards a contradiction that there is a graph $G$ with $t$ vertices and at least 
	$(2k-3)(t-k+1)+1$ edges which contains no $k$-vertex connected subgraph.
	We first show that $G$ has a large minimum degree.
	\begin{claim}\label{clm:min-degree}
		$G$ has minimum degree $\delta\geq 2k-2$. 
	\end{claim}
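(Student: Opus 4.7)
The plan is to prove this by contradiction together with a direct appeal to the induction hypothesis on graphs with one fewer vertex. Suppose some vertex $v$ in $G$ has degree $\deg(v) \leq 2k-3$. Consider the graph $G' := G - v$, which has $t-1$ vertices. Since removing $v$ deletes exactly $\deg(v)$ edges, $G'$ has at least
\[
	(2k-3)(t-k+1) + 1 - (2k-3) \;=\; (2k-3)(t-k) + 1
\]
edges. This is exactly the edge threshold that the induction hypothesis demands for graphs on $t-1$ vertices (taking the hypothesis's parameter to be $t$, which satisfies $t \geq 2k$ since we are in the induction step from $t$ to $t+1$ and so $t \geq 2k$). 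Thus the induction hypothesis applies and guarantees a $k$-vertex-connected subgraph inside $G'$. But any subgraph of $G'$ is also a subgraph of $G$, contradicting our standing assumption that $G$ contains no $k$-vertex-connected subgraph. Hence every vertex of $G$ has degree at least $2k-2$.

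The only subtlety is to check that the induction hypothesis is usable here, i.e.\ that $t \geq 2k$ so that the statement applies with the parameter $t$ in the role of the hypothesis's ``$t$''. This follows because the induction step is being carried out for $t+1 \geq 2k+1$, so $t \geq 2k$. There is no other obstacle: the argument is a one-line edge-deletion count combined with the inductive statement.
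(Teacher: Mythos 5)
Your proof is correct and follows essentially the same route as the paper's: remove the low-degree vertex, count the surviving edges, and invoke the induction hypothesis on the resulting graph with $t-1$ vertices to derive a contradiction. The paper phrases it contrapositively (taking $v$ to be the minimum-degree vertex and solving the edge inequality for $\delta$), but the underlying calculation and appeal to induction are identical; your explicit check that the hypothesis's parameter $t$ satisfies $t \geq 2k$ is a nice detail that the paper leaves implicit.
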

	\begin{proof}
		Consider a vertex $v$ with minimum degree $\delta$. 
		Removing $v$ leaves the graph with $t-1 \geq 2k-1$ vertices and $m' 
		\geq(2k-3)(t-k+1)+1-\delta$ 
		edges. 
		If $m' \geq (2k-3)(t-k)+1$ then $G$ contains a $k$-vertex-connected subgraph by induction; thus, we need to have $\delta \geq 2k-2$.
	\end{proof}
	
	We know that $G$ is not $k$-vertex-connected which implies there is a vertex cut $X$ with at most 
	$k-1$ vertices which when deleted disconnects $G$ into components $S$ and $T:=V-X-S$.
	By~\Cref{clm:min-degree}, for any vertex $u\in S$, $\deg(u) \geq 2k-2$.
	Moreover, since there are no edges between $S$ and $T$, any vertex $u \in S$ has neighbors only in $X$ and $S$. Thus, since $\card{X} < k$, we need $S$ to have at least $k-1$ vertices other than 
	$u$ to satisfy the degree requirement of $u$, which implies $\card{S}\geq k$.
	By symmetry, we also have $\card{T}\geq k$. 
	
	Let $G_1$ be the induced subgraph of $G$ on $S\cup X$ with $n_1$ vertices and let $G_2$ be 
	the induced subgraph of $G$ on $T\cup X$ with $n_2$ vertices.
	 Both $G_1$ and $G_2$ do not contain any $k$-vertex-connected subgraphs and have at least 
	 $2k-1$ vertices, so they have strictly fewer than $(2k-3)(n_1 - k+1)+1$ and $(2k-3)(n_2 - 
	 k+1)+1$ 
	 edges, respectively.
	 We now sum the number of edges $m_1$ of $G_1$ and $m_2$ of $G_2$:
	 \begin{align*}
	 	m_1+m_2 &\leq (2k-3)(n_1 - k+1) + (2k-3)(n_2 - k+1) \\
	 	&= (2k-3)(n_1+n_2 -2k+2) \\
	 	&\leq(2k-3)(t -k+1) \\
	 	&< m \tag{Since $m\geq (2k-3)(t -k+1)+1$}
	 \end{align*}
	 But we know $m_1+m_2 \geq m$ because $G_1$ and $G_2$ cover all edges of $G$ (and can even 
	 over count some edges, namely, those with both endpoints in $X$).
	 Thus, we arrive at a contradiction and such a graph $G$ cannot exist.
	 Therefore, we have shown the induction step and proved the proposition.
\end{proof}

\end{document}